\newtheorem{definition}{Definition}[section]
\newtheorem{theorem}{Theorem}[section]
\newtheorem{lemma}{Lemma}[section]
\newtheorem{remark}{Remark}[section]
\newcommand{\cmark}{\ding{51}}%
\newcommand{\xmark}{\ding{55}}%
\newcommand{\strlen}{\ensuremath{\textsf{len}}\xspace}
\newcommand{\roundr}{\ensuremath\mathcal R_{d}\xspace}
\newcommand{\rounds}{\ensuremath\aleph\xspace}
\newcommand{\chainlength}{\ensuremath{\mathcal{L}}\xspace}
\newcommand{\Cq}{\ensuremath{C_{q}}\xspace}
\newcommand{\astkratio}{\ensuremath{\Upsilon}\xspace}
\newcommand{\sysp}{\ensuremath{\epsilon}\xspace}
\newcommand{\chainC}{\ensuremath{\mathsf{C}}\xspace}
\begin{document}

\SetKwComment{Comment}{$\triangleright$\ }{}

%
% paper title
% Titles are generally capitalized except for words such as a, an, and, as,
% at, but, by, for, in, nor, of, on, or, the, to and up, which are usually
% not capitalized unless they are the first or last word of the title.
% Linebreaks \\ can be used within to get better formatting as desired.
% Do not put math or special symbols in the title.
\title{Reinshard: An optimally sharded dual-blockchain for concurrency resolution}
%
%
% author names and IEEE memberships
% note positions of commas and nonbreaking spaces ( ~ ) LaTeX will not break
% a structure at a ~ so this keeps an author's name from being broken across
% two lines.
% use \thanks{} to gain access to the first footnote area
% a separate \thanks must be used for each paragraph as LaTeX2e's \thanks
% was not built to handle multiple paragraphs
%
%
%\IEEEcompsocitemizethanks is a special \thanks that produces the bulleted
% lists the Computer Society journals use for "first footnote" author
% affiliations. Use \IEEEcompsocthanksitem which works much like \item
% for each affiliation group. When not in compsoc mode,
% \IEEEcompsocitemizethanks becomes like \thanks and
% \IEEEcompsocthanksitem becomes a line break with idention. This
% facilitates dual compilation, although admittedly the differences in the
% desired content of \author between the different types of papers makes a
% one-size-fits-all approach a daunting prospect. For instance, compsoc
% journal papers have the author affiliations above the "Manuscript
% received ..."  text while in non-compsoc journals this is reversed. Sigh.

\author{Vishal Sharma,
        Zengpeng Li,
        Pawel Szalachowski,
        Teik Guan Tan,
        Jianying Zhou % <-this % stops a space
\IEEEcompsocitemizethanks{\IEEEcompsocthanksitem V. Sharma is with the School of Electronics, Electrical Engineering and Computer Science (EEECS), Queen's University Belfast (QUB), Northern Ireland, United Kingdom. \protect
E-mail: v.sharma@qub.ac.uk
\IEEEcompsocthanksitem Z. Li is with the School of Cyebr Science and Technology, Shandong University, Qingdao, China. \protect
E-mail: zengpengliz@gmail.com
\IEEEcompsocthanksitem P. Szałachowski, T.G. Tan, and J. Zhou are with the ISTD Pillar, Singapore University of Technology and Design, Singapore. \protect
E-mail: pjszal@gmail.com, teikguan\_tan@mymail.sutd.edu.sg, jianying\_zhou@sutd.edu.sg}% <-this % stops an unwanted space
\thanks{Manuscript received.}
}

\IEEEtitleabstractindextext{%
\begin{abstract}
Decentralized control, low-complexity, flexible and efficient communications are the requirements of an architecture that aims to scale blockchains beyond the current state. Such properties are attainable by reducing ledger size and providing parallel operations in the blockchain. Sharding is one of the approaches that lower the burden of the nodes and enhance performance. However, the current solutions lack the features for \textit{resolving concurrency} during cross-shard communications. With multiple participants belonging to different shards, handling concurrent operations is essential for optimal sharding. This issue becomes prominent due to the lack of architectural support and requires additional consensus for cross-shard communications. Inspired by hybrid \textit{Proof-of-Work/Proof-of-Stake} (PoW/PoS), like \textit{Ethereum}, \textit{hybrid consensus} and \textit{2-hop blockchain}, we propose \textit{Reinshard}, a new blockchain that inherits the properties of hybrid consensus for optimal sharding. \textit{Reinshard} uses PoW and PoS chain-pairs with PoS sub-chains for all the valid chain-pairs where the hybrid consensus is attained through \textit{Verifiable Delay Function} (VDF). Our architecture provides a secure method of arranging nodes in shards and resolves concurrency conflicts using the delay factor of VDF. The applicability of \textit{Reinshard} is demonstrated through security and experimental evaluations. A practical concurrency problem is considered to show the efficacy of \textit{Reinshard} in providing optimal sharding.
\end{abstract}

% Note that keywords are not normally used for peerreview papers.
\begin{IEEEkeywords}
Blockchain, Hybrid Consensus, Sharding, Concurrency Control.
\end{IEEEkeywords}}

% make the title area
\maketitle

% To allow for easy dual compilation without having to reenter the
% abstract/keywords data, the \IEEEtitleabstractindextext text will
% not be used in maketitle, but will appear (i.e., to be "transported")
% here as \IEEEdisplaynontitleabstractindextext when the compsoc
% or transmag modes are not selected <OR> if conference mode is selected
% - because all conference papers position the abstract like regular
% papers do.
\IEEEdisplaynontitleabstractindextext
% \IEEEdisplaynontitleabstractindextext has no effect when using
% compsoc or transmag under a non-conference mode.

% For peer review papers, you can put extra information on the cover
% page as needed:
% \ifCLASSOPTIONpeerreview
% \begin{center} \bfseries EDICS Category: 3-BBND \end{center}
% \fi
%
% For peerreview papers, this IEEEtran command inserts a page break and
% creates the second title. It will be ignored for other modes.
\IEEEpeerreviewmaketitle

\IEEEraisesectionheading{\section{Introduction}\label{sec:introduction}}
Blockchain through its decentralized and authenticated append-only ledgers provides transparent, easily available, and accessible digital content sharing with censorship resistance~\cite{DBLP:conf/ndss/MalavoltaMSKM19,xu2019blockchain,lei2020groupchain}. The use of blockchain is dominated by the cryptocurrencies in the financial market and major efforts are made to make them efficient, scalable, and flexible~\cite{DBLP:conf/nsdi/WangW19,ghorbanian2020methods}. However, the market trends are changing and the applications of blockchains are seen beyond cryptocurrencies~\cite{DBLP:conf/icbc2/MarsalekKZT19}. The maintenance of low-latency, flexible, secure and scalable architecture is the key requirement of blockchain applications\cite{li2018blockchain,fan2020secure}. There exist a plethora of approaches to the formation of provably secure and scalable blockchains by using different consensus like PoW, PoS, etc. Dominantly, PoS has been a promising paradigm for future blockchain implementations as it provides multiple benefits over PoW-based systems, which involve a highly difficult puzzle-solving increasing with the network size. In general, PoS can be divided into four main categories~\cite{Xiao2019}, namely, chain-based PoS (Ppcoin~\cite{king2012ppcoin}, Nxt~\cite{Nxt2014}), committee-based PoS (Ouroboros~\cite{kiayias2017ouroboros}, Ouroboros Praos~\cite{david2018ouroboros}, Snow White~\cite{daian2017snow}), Byzantine Fault Tolerance (BFT)-based PoS (Tendermint~\cite{kwon2014tendermint}, Algorand~\cite{gilad2017algorand}), and delegation-based PoS (Lisk~\cite{lisk}, EOS.IO~\cite{eosio2018}). However, despite clear benefits, PoS-blockchains need to counter with issues affecting the consensus, which include influential assets, identity revealing, unwanted centralization, non-flexible and non-adaptive chaining, and community-frauds~\cite{Xiao2019}.
Scaling blockchains beyond the current state require decentralized control, low-complexity, flexible and efficient communications, which can be attained by reducing ledger size and providing parallel operations in the blockchain, such as \textit{Sharding}\cite{xie2019survey}.

%\subsection{Problem Statement and Our Contributions}\label{motivation}
In the sharded-blockchains, the communications are severely affected by \textit{concurrent requests} between the shards, which includes the split infrastructure of the blockchain. Concurrency-resolution requires enough delay to initiate a waiting mechanism between the involved parties to complete their transactions. A solution in the form of new architecture is required that can provide this delay during the cross-shard communications without compromising the security as a higher delay value can risk the security. Such architecture should be supported by an efficient consensus scheme that should not affect the general workflow as well as maintain the security properties of the sharded-blockchain that involves chain-growth, chain-quality, common-prefix, and unbiased sharding. The main \textit{contributions} of our work are:
\begin{itemize}
\setlength\itemsep{0.01em}
    \item A dual-blockchain architecture, inspired by hybrid consensus~\cite{Duong2017,DBLP:conf/wdag/PassS17,Hcashteam2017}, is presented with provision of auto-sharding.
    \item The new architecture resolves the fundamental issue of concurrency conflicts during cross-shard communications. A simulated VDF is used for resolving \emph{concurrency issues} and chain-extensions via a delay-factor.
    \item An exemplary concurrency problem, \textit{Train}-and-\textit{Hotel} booking system~\cite{Tribble}, is used for demonstrating the effectiveness of the proposed model.
  \end{itemize}
\section{Related Works}\label{background}
Security and decentralization are two main factors to maintain in large scale and permissionless systems~\cite{Xiao2019,Syta2017}. The existing mechanisms use a central entity or checkpoints for attack mitigation. However, this leads to unwanted centralization and affects performance, which can be handled through optimal sharding. It involves both intra- and inter-(cross) shard communications. Several sharding mechanisms have been developed for either scalability, security, or both. Most prominent includes, Zilliqa's model~\cite{zilliqa2017zilliqa,VanValkenburgh2016}, Omniledger~\cite{kokoris2018omniledger}, RapidChain~\cite{Zamani2018}, Elastico~\cite{luu2016secure}, and Harmony~\cite{VanValkenburgh2016}. However, there exists a practical issue of concurrency when cross-shard operations are involved, and there are no evident demonstrations on how \textit{concurrency} is handled in these sharded blockchains. Such issues cause deadlocks which prohibits the use of blockchain for a wide range of network-based applications. The lack of architectural-support by the existing blockchain-solutions in resolving concurrency conflicts while performing inter-shard communications motivated us to design a new blockchain without compromising the security.
\begin{figure}[!ht]
  \centering
  \includegraphics[width=220px]{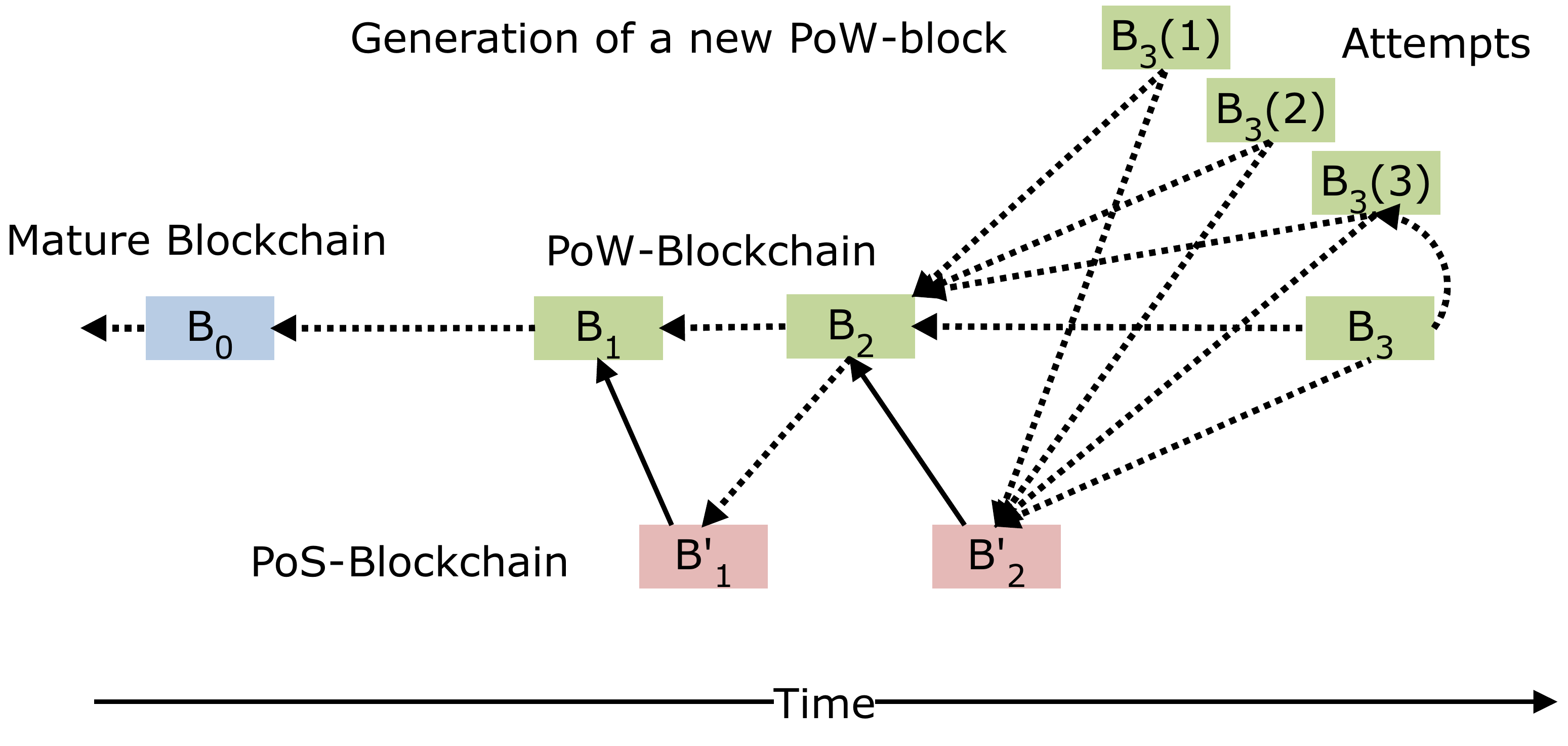}
  \caption{An illustration of modified 2-hop blockchain structure for generating PoW-blocks in TwinsCoin~\cite{Duong2018}}\label{figure2}
\end{figure}

Hybrid consensus allows better support for integrating, securing and improving different blockchains by using the strength of one consensus mechanism to resolve the shortcomings of another~\cite{DBLP:conf/wdag/PassS17,Hcashteam2017,DBLP:journals/corr/abs-1710-09437}. One of the examples of a hybrid blockchain is 2-hop blockchain~\cite{Duong2017}, which is a provably secure and scalable public blockchain. It uses PoW and PoS chain-pair as its operational principles. The key principle is two-way security which prevents an attacker from controlling the PoW-chain even if it owns the majority of the mining power. The use of PoS along with PoW allows the construction of a secure chain which is extended as TwinsCoin by Duong et al.~\cite{Duong2018}. The primary advantage of 2-hop blockchain and TwinsCoin is the prevention of adversaries that tend to control more than 50\% of the mining power~\cite{Duong2018,Duong2017}. Fig.~\ref{figure2} shows the PoW-block generation in TwinsCoin by following the concept of a modified 2-hop blockchain.

The difference between the 2-hop blockchain~\cite{Duong2017} and TwinsCoin~\cite{Duong2018} is the variation in the difficulty adjustments. The former uses an equal number of blocks for PoW and PoS chain, which makes difficulty-adjustment tedious and high complex; whereas the latter uses a ratio mechanism to decide on the number of PoW and PoS blocks, as shown in Table~\ref{Table1}. This makes difficulty-adjustment possible, however at an expense of additional resources which \textit{are not yielding to performance apart from security}. The 2-hop blockchain uses a PoW-PoS chain-pair, in which an extension to PoW chain is carried using a hash of the previous PoW block, a hash of the head of the PoS block and then solving it to find a suitable solution until a value lesser than the target value is not attained~\cite{Duong2017}. The PoS-chain is extended by finding a verification key to satisfy the new block under the given PoS target. This mechanism is used to generate a long-chain in one direction and then select the best valid chain-pair.
\begin{table*}[!ht]
\fontsize{8}{10}\selectfont
\centering
\caption{Comparison of chain extensions and difficulty adjustments of existing blockchain models. $h_{\rho} \in \{0,1\}^{\lambda}$ is the hash of the previous PoW block, $\lambda$ is the security parameter, $\rho$ is the suitable solution, $X$ is the block record, $T$ is the current PoW target, $h_{s} \in \{0,1\}^{\lambda}$ is the hash of the head of the PoS chain, $B_{PoW}$ is the PoW block, $v_{k}$ is the verification key, $\widetilde{T}$ is the current PoS target, $h_{a} \in \{0,1\}^{\lambda}$ is hash of the attempting PoW block for the local view, $r$ is the epoch, $t$ is the expected time of an epoch, $t_{r}$ is the actual time of an epoch, $n$ is the number of coins with the stakeholder, $E$ is the expectation of probability of successful PoW to stakeholder mapping, $\mu_{r}$ are the PoW blocks in $r$th epoch and $\mu$ is the number of blocks after which the difficulty has to be adjusted. (NA: Not Applicable because of principle difference)}\label{Table1}
\begin{tabular}{lllll}
\hline
\textbf{Blockchain} & \textbf{Extensions (PoW)} & \textbf{Extensions (PoS)} & \textbf{Difficulty (PoW)} & \textbf{Difficulty (PoS)} \\\hline
\hline\\
Nakamoto blockchain~\cite{nakamoto2008bitcoin}   &    $H(h_{\rho}, \rho, X) < T$              & $NA$             &           $T_{r+1}=\frac{t_{r}}{t}T_{r}$       &            $NA$      \\\\\hline\\
2-hop blockchain\cite{Duong2017}    &     $H(h_{\rho}, h_{s}, \rho) < T$             &    $\widetilde{H}(B_{PoW}, v_{k})<\widetilde{T}$               &       $NA$          & $NA$\\\\\hline\\
TwinsCoin~\cite{Duong2018}  &       $H(h_{\rho}, h_{s}, h_{a}, \rho)<T$          &                $\underbrace{\widetilde{H}(B_{PoW}, v_{k})<n\widetilde{T}}_\text{non-flat model}$  &   $T_{r+1}=\frac{\mu\;t_{r}}{\mu_{r}\;t\;E}T_{r}$               &        $\widetilde{T}_{r+1}=\frac{\mu_{r}\;E}{\mu}\widetilde{T}_{r}$      \\\\\hline
\end{tabular}
\end{table*}
In 2-hop blockchain, a PoW-block must be accompanied by a PoS-block for its possible inclusion in the blockchain, whereas in TwinsCoin, PoW-blocks without PoS-blocks are referred to as attempting blocks. The actual length of the blockchain is only based on the successful blocks.%% To give a clear idea of procedures, we have summarized the extensions and difficulty adjustments of Nakamoto~\cite{nakamoto2008bitcoin}, 2-hop blockchain~\cite{Duong2017} and modified 2-hop (as used in TwinsCoin~\cite{Duong2018}) in Table~\ref{Table1}.
%The rest of the article is structured as follows: Section~\ref{proposedmodel} presents the proposed \textit{Reinshard} with sub-operations related to sharding, difficulty adjustments, and concurrency resolutions. Section~\ref{evaluations} discusses security and performance analysis. Finally, Section~\ref{conclusions} concludes the article.

\section{Proposed Model: Reinshard}\label{proposedmodel}
%This part of the article presents the details of the core architecture along with chain-growths, auto-sharding, and concurrency resolutions.
%\subsection{Reinshard-Blockchain: Core Architecture}
The proposed blockchain is a combination of the PoW-PoS chain-pair and PoS sub-chain which allows an effective way of including blocks, especially targeted towards the resolution of the concurrency issues, as shown in Fig.~\ref{figure3}.
\begin{figure}[!ht]
  \centering
  \includegraphics[width=240px]{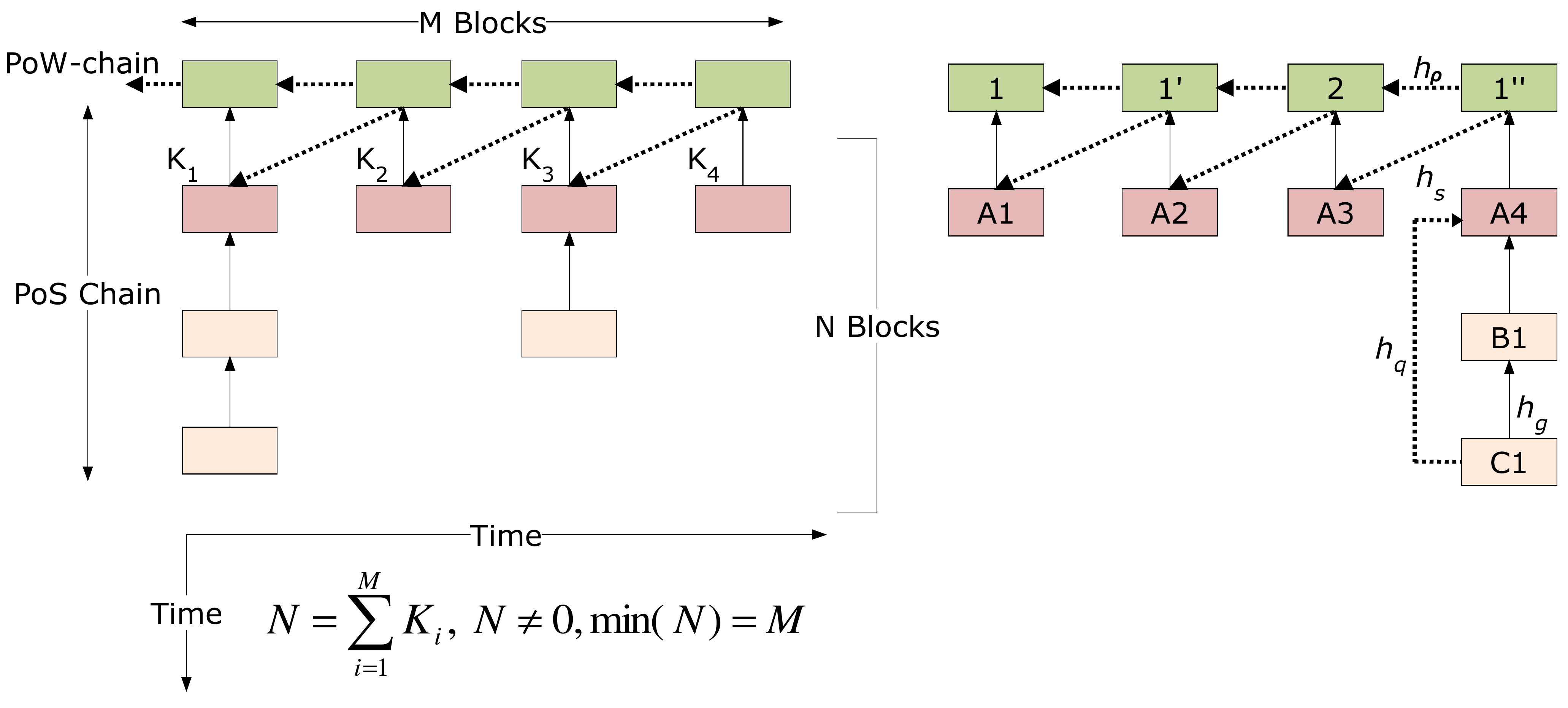}
  \caption{An overview of the proposed blockchain structure with growth in two directions.}\label{figure3}
\end{figure}
In \emph{Reinshard}, the initial blocks of the chain-pairs are generated using the principles of PoW in 2-hop blockchain~\cite{Duong2017}, which on lateral functioning are joined by many sub-PoS blocks. Thus, growing the chain in two directions. In general, the proposed blockchain structure can be expressed as follows:%\\\newline
\begin{definition}\label{definition_block}
The dual-blockchain structure can be defined as a variant of the 2-hop chain-pair denoted by $B_{G}=\langle C_{W}, C_{S} \rangle$, where $C_{W}$ and $C_{S}$ are the set of PoW blocks (for miners) and PoS blocks (for stakeholders) with $M$ and $N$ blocks, respectively, and $B_{G}$ is the global blockchain. Here, each pair (by definition of 2-hop blockchain~\cite{Duong2017}) will be considered valid if every PoW block is possessing at least one PoS block.
\end{definition}
\begin{definition}
Each PoW-PoS chain-pair can further have pseudo blocks which are denoted by a set $C_{W}^{(P)}$, such that $C_{W}^{(P)}\subset C_{W}$ and $|C_{W}^{(P)}| < |C_{W}|$, where $|.|$ denotes the cardinality. As stated in Definition~\ref{definition_block}, $M$ also involves the pseudo blocks. The pseudo blocks are used to include additional PoS blocks under the same PoW-PoS chain-pairs respecting the limits of maximum PoS blocks in each sub-chain.
\end{definition}
\par
In the proposed model, $N=\sum_{i=1}^{M} K_{i}$, where $K_{i}$ denotes the number of sub-blocks with the $i$th PoW block including the one associated before the attachment of the incoming blocks. Hence, the minimum number of PoS blocks in the entire chain is given as $N_{min}=M-M'$, where $M'(=|C_{W}^{(P)}|)$ is the number of pseudo blocks. The key factor in the formation of this initial blockchain is the identification of the maximum permissible value of $K$ so that the performance, as well as the security, can be realized simultaneously.

\textit{How many PoS blocks can be appended to a valid PoW-PoS chain-pair?} This deals with the optimization aspect and should be adjusted dynamically, which means that the chain structure should be able to grow as well as shrink. This adds to the complexity of guessing the difficulty adjustment for an adversary and it cannot identify the leader amongst the valid chain-pairs; alongside, it fails to track the nodes' locations in the shards. The key factor to account for the chain length is the stake-ownership for the PoS-chain by the nodes with the added blocks. If $S_{max}$ and $S_{min}$ are the maximum and the minimum number of stakes, then the length of the PoS chain under each PoW block must satisfy the earned stakes, as $S_{min} \leq S_{i} \leq S_{max}$. However, an attacker may show high stakes and can affect the chain limits by controlling the chain length leading to a single-shard takeover attack. This can be prevented by optimal calculation of upper bounds of $K$, and even if an attacker controls the operations, it cannot guarantee the growth of the same PoS chain beyond a certain range. Such a mechanism will prevent false block generation as well as help other nodes to identify the faulty operations. %The value of $K$ can be calculated either through equal bounds, random or reward-based allocations. Equal bounds affect the sharding procedures by causing long-range attacks as well as a limited-stake attack (controlling at least $x$ number of stakes instead of showcasing majority). Such attacks, otherwise, can be prevented if bounds on $K_{i}$ are unpredictable. Random allocation can reduce the overheads associated with the predictability of blocks based on the difficulty adjustments, but it increases the complexity of shard formations and communications. In contrast, reward-based can help in unpredictable as well as low-complex fixation of the bounds.\\
\subsection{Reinforcement Learning (RL) based-bound}
RL-based bound is an important strategy to be used in the case of uncertain environments. RL helps to accumulate decisions based on certain rewards that are observed when the environment transits from its current state to the desired state. In \emph{Reinshard}, PoW-PoS chain-pairs (part of $B_{G}$) are the environments which are in a state possessing a certain number of PoS blocks and the next state refers to the inclusion of an incoming PoS block. In the proposed approach, the attachment through RL is based on metrics of the nodes owning a valid PoW-PoS chain-pair. The properties involve a list of metrics that are to be considered for deciding the rewards on transitions which help to fix the upper bounds on $K_{i}$. These metrics include, the number of permissible pseudo IDs ($\eta_{s}$), available storage ($G_{a}$), expected required storage ($G_{e}$), available stakes ($S_{a}$), Computation-power reward ($\omega$), and chain-growth rate reward ($\theta$). The computational power and block operations (chain growth rate) are presented as a value where the physical meaning is mapped to parameters, $\omega$ and $\theta$, respectively. All these metrics are operated over RL-model to help decide the rewards, which are translated into the upper limits of the number of blocks that must be present or allowed as a part of a sub-PoS chain. The use of RL-rewards helps to ensure that actual details on the configurations of nodes are never revealed to its peers, which prevents attackers to know the possibilities of incoming blocks. Moreover, it also prohibits the attacker to generate more PoS blocks to prevent general PoW-PoS operations. Such provisioning prohibits the long-range attacks as well as prevents an excessive number of untraced pseudo-IDs which may otherwise increase the performance overheads. Considering this, the additional number of blocks that can be accommodated under one pseudo ID of a valid chain-pair is given by:
    \begin{equation}\label{eq:4}
        K_{i,t}^{(A)}=\frac{G_{a}}{\eta_{s}G_{e}},
    \end{equation}
provided that $G_{e}\eta_{s} \leq G_{a}$. If this condition is unsatisfied, the new blocks cannot be included under the valid chain-pair. Consequently, the permissible range of $K_{i}$ at $t+1$ becomes $K_{i,t}+K_{i,t}^{(A)}$. On observing this value, each pseudo ID group of the valid PoW-PoS chain-pair is evaluated for their chain length, and the one with available slot (position for the block) and being a leader can accommodate the incoming block. In case the available space is split between the two or more pseudo chains, the attachment is unaffected as one block is appended at a time. In case the blocks are in batches, the First Come First Serve (FCFS) is considered, and in the case of the parallel instances, the blocks are assigned in order of their memory utilization (decreasing order). There are certain cases where the number of pseudo IDs possesses similar properties (especially when the blockchain is initiated or additional nodes join), then the selection of pseudo chain is done using $S_{a}$, $\omega$ and $\theta$. As a solution, the pseudo chain with the corresponding users having $\max(S_{a})$ is selected to accommodate the new PoS block. Scenarios, where the stakes cannot be distinguished, account for reward values of $\omega$ and $\theta$ to select a sub-chain with the maximum reward-value. The reward-value based on $\omega$ and $\theta$ is calculated as:
    \begin{equation}\label{eq:5}
        R(\omega,\theta)_{i,(t)}=\theta_{i,(t)}^{-1}+\gamma_{(t)}\;\omega_{i,(t)}^{-1},
    \end{equation}
where $\gamma$ is the controlling parameter which shows the association and is depicted as the probability of successful identification of a valid chain-pair. For most of the cases, $\gamma=1$, as the ledgers are always aware of the associated blocks. Here, $\omega$ is evaluated as the maximum likelihood estimate for the mining-rewards earned (upon validation), for the duration $t$, considering that the block operations follow a single parameter beta distribution~\cite{gupta2004handbook}, such that
    \begin{equation}\label{eq:6}
        \omega_{i, (t)}=K_{i,(t)}\left({\sum_{j=1}^{K_{i,(t)}}\log\left(\frac{1}{1-\left(\frac{O_{a,(t)}}{O_{e,(t)}}\right)_{j}}\right)}\right)^{-1},
    \end{equation}
where $O_{a}$ and $O_{e}$ are the actual earning rate and expected earning rate (these metrics are affected by the difference in the hash rate and the difficulty of mining), respectively. This helps to prevent the dominance effect\footnote{Consider an application process, where a server keeps on earning mining-rewards and manipulates users. In such a case, the server imbalances the PoW-PoS chain-pairs and leads to a dominance effect. This will lead to several attacks that are bounded to occur in an uncontrolled PoS chain.}. However, in non-available situations, the reward is calculated based on $\theta$, which is calculated as:
    \begin{equation}\label{eq:7}
        \theta_{i,(t)}=\left(\theta_{i,(t-1)}\left(1+\delta_{(t-1)}\right)^{t}\right),
    \end{equation}
where $\delta$ is the chain growth rate, which is calculated as the ratio of the actual number of blocks appended to the PoS-sub-chain to the expected number of appends. In a fully-aware scenario, this ratio is taken to the total PoS-blocks in all the valid chain-pairs. It is to be noted in (\ref{eq:5}) that the chain with subsequent lower values for the observation rates defined for incoming rewards and the chain growth is given preference in selection. This prevents long-waits as well as increases participation. In the scenarios, where the above designing does not account for the inclusion of a new PoS block, a valid (\textit{winner}) PoW-PoS chain-pair may delegate the incoming PoS block to another valid PoW-PoS chain-pairs. %%The additional procedures for handling incoming PoS block and delegation by the winning chain-pair are given in Appendix.A.
The procedures for handling incoming PoS block by the winning chain-pair are given in Algorithm~\ref{algo_rl}.
\begin{algorithm}[!ht]
\fontsize{8}{10}\selectfont
\SetAlgoLined
\KwResult{PoS block allocation}
 $V$=$\langle C_{W}, C_{S} \rangle^{V}$\Comment*[r]{listing valid chain-pairs}
 obtain (at $t$) $\eta_{s}$, $G_{a}$, $G_{e}$, $S_{a}$, $\omega$, $\theta$;\\
 $J_{V}$=$check\_validity(V,bool)$;\\%%\Comment*[r]{available valid chain-pairs}\\
 \If{approach==reinforced}{
        \While{$pos\_decision$!=True}{
             select $i$=$valid(J_{V})$\Comment*[r]{winner chain-pair}
             $\eta_{x}$ = $count(pseudo\_ids\leftarrow valid(J_{V}))$;\\%\Comment*[r]{counting IDs}
             $K_{i}^{(A)}=\frac{G_{a}}{\eta_{x}G_{e}}$\Comment*[r]{extra block accommodations}
             $K_{max}$=$K_{i}^{(A)}+strength(i)$\Comment*[r]{maximum PoS blocks}
              \eIf{$\eta_{x}\leq \eta_{s}$}{
              \uIf{$K_{i}^{(A)} > 1\;\; \&\& \;\;\eta_{x} > 1$}
                {
                 \While{$j\leq \eta_{x}$}{
                        \If{$K_{j}^{(A)}+strength(j)\leq K_{max}$}{
                            $X$=$\langle J_{V,j} \rangle$; \\%\Comment{pseudo pairs} \\
                        }
                    $j$=$j+1$\\
                    }
                 \eIf{$count(X)$==$1$}{
                            $allocate\_unobject\_reinforced(X)$; \\%\Comment{block attachment} \\
                    }
                    {
                    \While{$q$ $>$ $count(X)$}{
                            $S_{a,q}$=$stakes(X_{q})$;\\%\Comment*[r]{tallying stakes}
                            $q$=$q+1$;\\
                        }
                        \eIf{$Equal(S_{a})$==True}{
                            $Z$=$count(Equal(S_{a}))$\\
                            \While{$r \leq Z$}{$R(\omega,\theta)_{r}$=$\theta_{r}^{-1}+\gamma\;\omega_{r}^{-1}$;\\
                            $r$=$r$+1;\\
                            }
                            $Y$=$Z\leftarrow\max(R(\omega,\theta))$\Comment*[r]{best pair}
                            $allocate\_unobject\_reinforced(Y)$;\\
                        }
                        {
                            $Y$=$\max(S_{a})$\Comment*[r]{best pair}
                            $allocate\_unobject\_reinforced(Y)$;\\
                        }
                    }
                }
                \uElseIf{$K_{i}^{(A)}$==$1$ $||$ $\eta_{x}$==$1$}{
                    $allocate\_unobject\_reinforced(i)$;\\
                }
                \Else{
                    $wait\_or\_delegate()$;\\%\Comment{wait or delegate procedure invoked}
                }
              }
               {
                $wait\_or\_delegate()$;
               }
             $Logs()$%%\Comment*[r]{logs depending on debugging options}
        }
 }
 \caption{PoS block allocations with RL.}\label{algo_rl}
\end{algorithm}
To model delegation, the model uses Q-learning RL, such that $B_{G}$ forms the environment considering a given chain-pair, $\langle C_{W}, C_{S} \rangle^{V_{x}}$, where $V_{x}$ denotes the valid pair. Now the set of actions are denoted as $\langle C_{W}, C_{S} \rangle^{A}$, which includes two possible values, allow (delegate) and disallow. Using these, a timely Q-learning table is generated to understand the value of $K_{i}$ for each instance. Note that the storage of the Q-learning table is for the purpose of debugging and its visibility is subject to the configuration of the deploying application. For delegation, the PoW-PoS chain-pair are bid around the non-inverse maximization of (\ref{eq:5}), which means that the choice of next PoW-PoS chain-pair, by definition~\cite{melo2001convergence}, is based on:
    \begin{equation}\label{eq:8}
        \max(Q(\langle C_{W}, C_{S} \rangle^{V_{x}}_{t+1},\langle C_{W}, C_{S} \rangle^{A})),
    \end{equation}
such that
   \begin{eqnarray}\label{eq:9}
        Q(\langle C_{W}, C_{S} \rangle^{V_{x}}_{t},\langle C_{W}, C_{S} \rangle^{A})=\nonumber\\
        Q(\langle C_{W}, C_{S} \rangle^{V_{x}}_{t-1},\langle C_{W}, C_{S} \rangle^{A})+ \delta_{(t-1)}.\nonumber\\
        \left(R'(\omega,\theta)+P_{x}. \max(Q(\langle C_{W}, C_{S} \rangle^{V_{x}}_{t+1},\langle C_{W}, C_{S} \rangle^{A})))\right),\nonumber
    \end{eqnarray}
where
    \begin{equation}\label{eq:10}
        R'(\omega,\theta)_{i,(t)}=\theta_{i,(t)}+\gamma_{(t)}\;\omega_{i,(t)}.
    \end{equation}
Here, $\delta\neq 0$, $Q(\langle C_{W}, C_{S} \rangle^{V_{x}}_{0},\langle C_{W}, C_{S} \rangle^{A})=0$, and $P_{x}$ is the probability of reserving a future slot for the incoming blocks. As it is difficult to predict such reservations, we model this probability around controlling parameter, such that $P_{x}=\gamma$. The physical meaning of the maximization means that the delegated PoS block must not over consume the resources of the PoW-PoS chain-pair to which it is delegated leaving no space for its own PoS blocks. Thus, the chain-pairs which show sufficient intake possibilities are selected for delegation.

\subsection{Chain-growth and difficulty adjustments}\label{chain_growthsection}
The chain-growth in the proposed blockchain is observed in two directions, one for the PoW-PoS chain-pair and another for PoS-sub-chain.

\noindent\textbf{--PoW-PoS chain-pair:} The operations of the PoW-PoS chain-pair in the proposed model are inspired by 2-hop blockchain~\cite{Duong2017}. However, the PoW blocks which are ignored from chain inclusion have to re-approve their mechanism and convert itself to a valid pair by first joining through a PoS sub-chain. The block format in the chain-pair is similar to 2-hop blockchain (Fig.~\ref{figure3}) as $H(h_{\rho}, h_{s}, \rho) < T$. However, the expansion of the PoW-PoS chain-pair is controlled by the current pseudo-pair generation rate of the entire blockchain. This means, if $\alpha_{m}$ is the pseudo-pair generation rate, then the new chain-pair can be accommodated according to,
\begin{equation}\label{eq:11}
   H\left(\alpha_{m}, Z\left(h_{\rho}, h_{s}\right), \rho\right) < T,
\end{equation}
where $H(.)$ and $Z(.)$ are the respective cryptographic hash functions. The proposed model uses pseudo-blocks' generation rate as it focuses on improved applications along with sharding. The function uses parts of the blocks to which the newly added chain-pair must be appended to provide reliable as well as sustainable growth of the chain. This chain pairing helps to prevent the major of the chain-controlling attacks, as well as it increases the participation by allowing the blockchain nodes to control the pseudo generation rates. Additionally, it prevents multiple forks and single node control over the incoming PoS blocks.\\%\footnote{The block generation can be further complicated to accommodate direct inclusion of VDF in the PoW-PoS chain-pair. However, this will involve excessive hash operations with delays as well as increase the complexity of storage.}.
%%Irrespective of that, to show how such complex operations can be utilized, a modified PoW chain-growth phenomenon is suggested in Appendix.
%%%%%\footnote{A simulated VDF (weak VDF) is used to ensure the operability of the proposed system as there is no construction available to directly generate a practically usable VDF.}
\noindent\textbf{--PoS sub-chain:} This chain operates in three parts. At first, a list of valid chain-pairs is selected, who can compete to be the leader, and is selected using the concept of a VDF (construction provided in Appendix.A of the supplementary file), and finally, the PoS block is ready to be allocated to a chain-pair. VDF prevents PoS blocks from attempting to attach to additional chain-pairs at the same instance because of its core ideology of sequential work with deterministic functions. From the application's perspective, VDF helps to delegate in situations of unavailable miners or stakeholders as well as model the wait algorithms.\\
\noindent$\bullet$ \textbf{List of validators.} This list includes the valid chain-pairs who can compete to be the leader for extending the PoS sub chain under their pre-attached PoS block. The validators are selected based on the original rewards earned as a part of the RL mechanism, such that all the chain-pairs, which holds $R'(\omega,\theta)_{i,(t)} \geq \textnormal{ mean}(R'(\omega,\theta)_{\sum_{i},(t-1)})$ are marked as valid. Additionally, the valid chair pairs must satisfy the requirements in (\ref{eq:4}). Furthermore, the valid chain-pair must hold $\eta_{x}\leq \eta_{s}$ and a valid $\alpha_{m}$ with a consideration that each pair must be complete and no PoW-PoS chain-pair with a pending status can be included in the list of the validators.\\
\noindent$\bullet$ \textbf{Selection of a leader.} A VDF is defined as a function $\mathcal{W}$=(\textit{Setup}, \textit{Eval}, \textit{Verify}), considering that the operations of \textit{Verify} take much lesser time than the \textit{Eval}~\cite{Boneh2018}. This means that the node with a valid PoW-PoS chain-pair must possess sufficient computational ability than the nodes of the corresponding sub-chain, which will prevent an adversary from taking control by solving the evaluation puzzle in extremely lesser time than the verification puzzle. The solution is modeled around the time ($\tau$), which is taken to evaluate a valid pseudo-chain-pair. In the given model, out of the validators, the one with a valid $(\mathcal{E}_{k}, v_{k})$ is selected as a leader.
\begin{enumerate}%[label=\roman*]
\item \textbf{\textit{Setup}:} The proposed model operates with the basic assumption of a VDF for generating the PoS block, i.e, the public parameters, $(\mathcal{E}_{k}, v_{k})$, are available for evaluating and verifying the blocks within the polynomial limits (sub-exponential time as stated in Boneh et al.~\cite{Boneh2018}). With such public parameters, the validator, which solves
        \begin{equation}\label{eq:11b}
             \widetilde{H}\left(B_{PoW}, v_{k}\right)<R'(\omega,\theta)\;\widetilde{T},
        \end{equation}
        is elected as the leader. Accordingly, the public parameters are obtained through the randomized \textit{Setup} algorithm which uses $(\lambda, \zeta)$ as its parameters with $\zeta$=$f(\tau, \widetilde{T})$.
\item \textbf{\textit{Evaluation}:} The leader performs the evaluation function by considering the allocation mechanism which will be used for placing the block under the required chain-pair. For this, the leader uses the hash of the parent chain-pair's PoS block ($h_{q}$) and the hash of the previous PoS block ($h_{g}$). The \textit{Eval} algorithm takes $\mathcal{E}_{k}$ and $\mathcal{I}=H(h_{q},h_{g})$, where $\mathcal{I} \in \{0,1\}^{\lambda}$ operates with $\tau$ which is the current time consumption as per $\alpha_{m}$, to generate an output $\mathcal{O} \in \{0,1\}^{\lambda}$ as an image of $\mathcal{I}$ along with a proof $\pi$.
\item \textbf{\textit{Verify}:} Once the \textit{Eval} is executed, the PoS block is generated using $B_{PoW}$, $v_{k}$, $\mathcal{I}$, $\mathcal{O}$, and $\pi$. The \textit{Verify} function is evaluated with ($v_{k}$, $\mathcal{I}$, $\mathcal{O}$, $\pi$) and a Boolean is recorded as an observation. Upon success, the PoS block is appended to the intended location which was used by the leader during the evaluation procedure. Any adversary, which poses to be a valid PoW-PoS chain-pair, needs to solve the \textit{Eval} function in a shorter time than that of \textit{Verify}. As the allocation is decided by the winning pair, an adversary needs to acquire all the possible locations as well as run the \textit{Eval} for all the cases under time extremely lower than $\tau$, which is computationally expensive even with influential hardware. Thus, the security of the PoS block is supported by the choice of allocation procedure (known locally only to the leader) and the time-difficulty of the VDF.
\end{enumerate}
\begin{figure*}[!ht]
  \centering
  \includegraphics[width=0.7\linewidth]{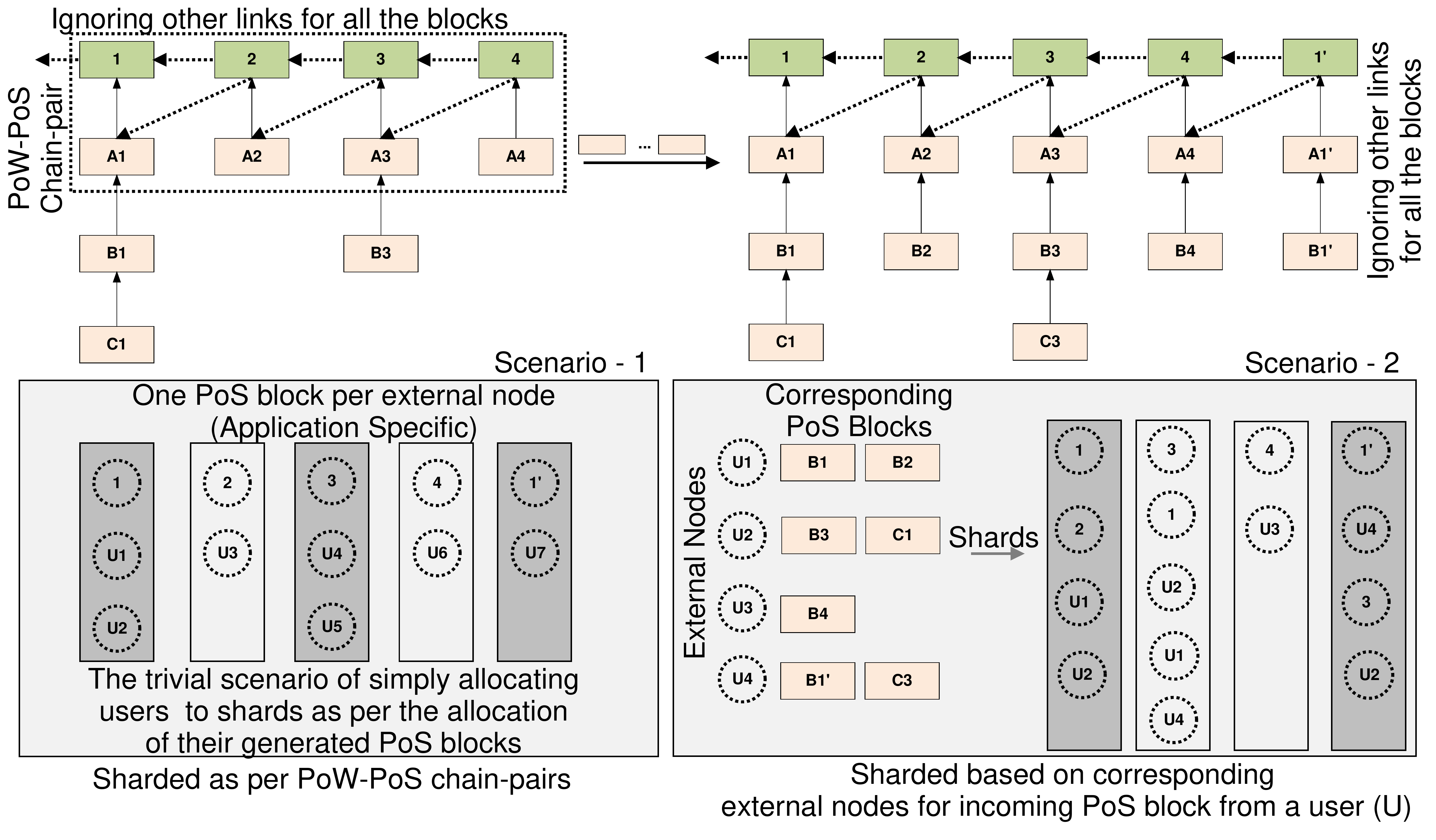}
  \caption{An overview of the sharding process (storage to processing).}\label{Figure5}
\end{figure*}

\noindent\textbf{--Difficulty adjustments:} The primary target of the proposed model is to develop a generic system that can be widely adopted for the major of the applications listed for the blockchain. The difficulty adjustment is required to control the extreme growth as a valid chain-pair that may show excessive storage to keep increasing its pseudo pairs or sub-chains. Thus, in the main PoW-PoS chain-pair, the difficulty is adjusted as:
\begin{equation}\label{eq:12}
T_{r+1}=\left(\frac{\max(\eta_{x,\langle C_{W},C_{S}\rangle^{V_{i}}})}{\sum\limits_{i\neq j}(\eta_{x})_{j}}.\frac{R'(\omega,\theta)_{i,\langle C_{W},C_{S}\rangle^{V_{i}}}}{\sum\limits_{i \neq j} R'(\omega,\theta)_{j,\langle C_{W},C_{S}\rangle^{V_{j}}}}\right)T_{r},
\end{equation}
where $T_{r+1}$ is the next difficulty level w.r.t. time $t$, and (\ref{eq:12}) controls the winning pair and prevents reducing the participation and control over the blockchain. Although the PoW-PoS chain-pair difficulty and time delay from VDF are enough to control the global blockchain, to manage the deadlocks/concurrent operations and to control the pseudo-pair competition, difficulty adjustment is also modeled for sub-chain, such that, for $K^{(R)}$ number of incoming PoS blocks,
\begin{equation}\label{eq:13}
\fontsize{8}{10}\selectfont
\widetilde{T}_{r+1}=\begin{cases}
\left(\frac{K^{(A)}_{i,\langle C_{W},C_{S}\rangle^{V_{i}}}}{\sum\limits_{i\neq j}(K^{(A)}_{x})_{j}}.\frac{S_{i,\langle C_{W},C_{S}\rangle^{V_{i}}}}{\sum\limits_{i \neq j} S_{j,\langle C_{W},C_{S}\rangle^{V_{j}}}}\right)\widetilde{T}_{r}, \sum\limits_{i\neq j}(K^{(A)}_{x})_{j} \geq K^{(R)}, \\
\left(\left(\frac{K^{(A)}_{i,\langle C_{W},C_{S}\rangle^{V_{i}}}}{\sum\limits_{i\neq j}(K^{(A)}_{x})_{j}}\right)^{-1}.\frac{S_{i,\langle C_{W},C_{S}\rangle^{V_{i}}}}{\sum\limits_{i \neq j} S_{j,\langle C_{W},C_{S}\rangle^{V_{j}}}}\right)\widetilde{T}_{r}, otherwise.
\end{cases}
\end{equation} %%Note that the probabilistic adjustments, as used in~\cite{Duong2018},~\cite{Duong2017},~\cite{nakamoto2008bitcoin}, cannot be the governing principle for large scale applications requiring high participation.
%\begin{figure*}
%\centering
%\begin{minipage}{.5\textwidth}
 % \centering
  %\includegraphics[width=1\linewidth]{Figure5.eps}
  %\caption{An overview of the sharding process (storage to processing).}\label{Figure5}
%\end{minipage}%
%\begin{minipage}{.5\textwidth}
 % \centering
 % \includegraphics[width=0.95\linewidth]{Figure6.eps}
 % \caption{An overview of states for inter-shard communications.}\label{Figure6}
%\end{minipage}
%\end{figure*}

\begin{algorithm}[!ht]
\fontsize{8}{10}\selectfont
\SetAlgoLined
\KwResult{Sharding and ledger division}
 Input: PoS\_block\_allocation()\\
 set $inst$=$node\_ability()$\Comment*[r]{1 or more PoS blocks}
 set $i$=0, $j$=0, $q$=0, $d$=0;\\
 obtain $N$=extern\_node\_value(PoS\_block\_allocation);\\
 \eIf{$inst$==$1$}{
    \While{$i$ $\leq$ $M$ \&\& $sum(j)$ $\leq$ $N$}{
    $c$=$strength\_real(i)$\Comment*[r]{Allocated Sub-PoS blocks}
        \While{$j$ $\leq$ $c$}{
            $shard[i]$=$store\_chain$($K_{j}$,$i$,$c$);%%\Comment*[r]{Creating shards}
            $j$=$j$+$1$;\\
        }
    $j$=$j$+$1$;\\
    }
    $store\_advertise\_reward(shard)$\Comment*[r]{shard and reward}
 }
 {
 \While{$q$ $\leq$ $count(external\_nodes)$}{
        $ex\_loc$=$external\_block\_alloc()$\Comment*[r]{block location}
        \While{$d$ $\leq$ $count(ex\_loc)$}{
            $z$=$valid\_chain\_pairs(d,ex\_loc)$;\\
            $shard[q]$=$store\_chain$($K_{z}$,$q$,$z$);\\
            $d$=$d$+$1$;
            }
        $q$=$q$+$1$;\\
    }
    $store\_advertise\_reward(shard)$\Comment*[r]{shard and reward}
 }
 \caption{Auto-sharding with \textit{Reinshard}.}\label{algo5}
\end{algorithm}

\noindent\textbf{--Sharding (Storage and Processing):}
A fully-scaled ledger may lead to several overheads in terms of storage, synchronization, and sharing, whereas the partial-ledgers are better in terms of performance but require additional protocols for communications. Using external methods for sharding in a pre-formed blockchain leads to several overheads. This is overcome in the proposed model as the node allocations across the valid chain-pairs provide a unique auto-sharding feature which helps to lower the burden in terms of ledger maintenance as well as reduce the synchronization problem. The newly formed blockchain helps to attain sharding by architecture. An illustration of sharding can be observed in Fig.~\ref{Figure5} with procedures in Algorithm~\ref{algo5}. Here, function $node\_ability()$ calculates the ability of the node to have 1 or more PoS blocks. Function $extern\_node\_value(Pos\_block\_allocation)$ obtains the external node value for the allocated PoS block, function $strength\_real()$ calculates allocated sub-PoS blocks, function $store\_chain()$ stores the formed chain to shards, function $store\_advertise\_reward()$ rewards for the shard formation, function $count()$ calculates the number of nodes, function $external\_block\_alloc()$ expresses the block location and function $valid\_chain\_pairs$ validates the chain pairs. As shown in Fig.~\ref{Figure5}, \textit{Reinshard} can be operated in two ways, the first one involves a dedicated application-specific sharding in which each external node is entitled to generate a single PoS block (e.g. use for login) and the shards are formed based on the attachments of PoW-PoS chain-pairs. The second involves multiple PoS blocks from external nodes, which are allocated as per the decision of the valid PoW-PoS chain-pairs. In such a scenario, the shards are formed following the external node contents and the location of their PoS blocks. In the first case, each external node is entitled to a single stake each time a new PoS block is appended to their sub-chain and it is convenient to manage and control such a blockchain. However, unlike this, in the second case, the stakes are rewarded every time a new PoS block is appended as well as these are given stakes out of the rewards earned by the valid PoW-PoS chain-pair on their selection as a leader.
%Furthermore, the arrangement of nodes in shards also earns them more stakes which are shared by the nodes which fall in their shards and have earned stakes from their respective chain-pair. This allows effective incentives in the blockchain and helps to acquire the further operations of communications and node transitions.
\begin{figure}[!ht]
  \centering
  \includegraphics[width=210px]{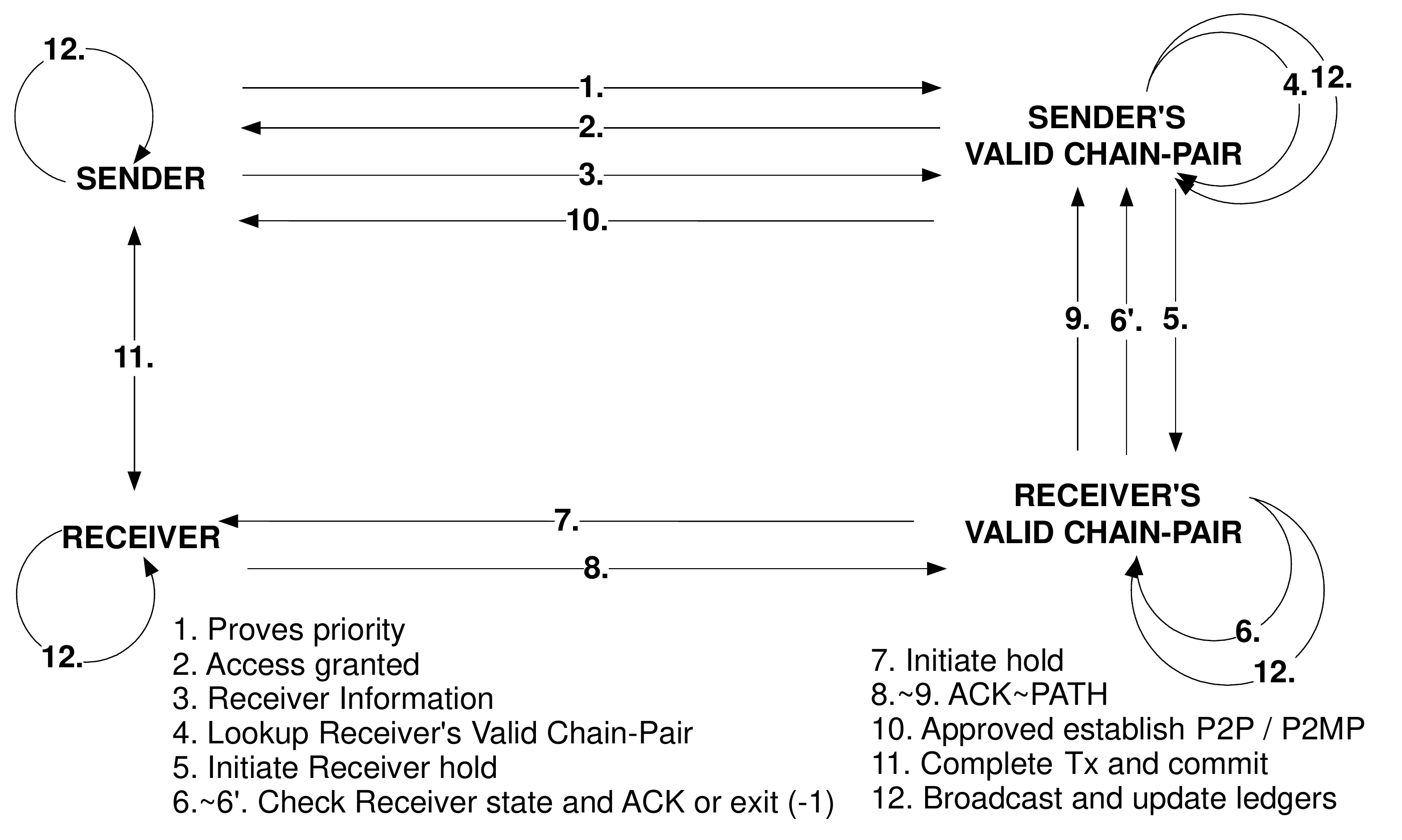}
  \caption{An overview of states for inter-shard communications.}\label{Figure6}
\end{figure}

\noindent\textbf{--Concurrency resolutions:} Concurrency is a correct measure of efficiency for sharded blockchains. In the proposed approach, at first, priority $\psi$ is defined to select the receiver, as shown in Fig.~\ref{Figure6}. This priority helps to control the concurrent operations as well as prevent any deadlocks during transactions. $\psi$ is calculated based on the rewards/stakes earned by a generator (external node). It is similar to the booking of the receiver node to share some information. However, with unlimited control over the receiver, the entire blockchain can undergo certain attacks (single-shard takeover) as it will prevent any sort of updates across the chain. Thus, to prevent this, a VDF mechanism is adopted and a sender with priority $\psi$ can only hold the receiver for $(\tau'.\mathcal{D}+\Gamma)$ duration, ($\tau' < \tau$), where $\mathcal{D}$ is the number of involved valid chain-pairs which are used to trace the receiver and $\Gamma$ is the network latency. Note that each node may not have an average time greater than $\tau$ as it may compute \textit{Eval} faster than \textit{Verify}. This allows simultaneous control over the intended receivers during concurrent operations. Once these controls are available, the transaction can be processed, and ledgers can be updated based on the sharded blockchain.

\noindent\textbf{--Intra and Inter-shard communications:} The intra- and inter-shard communications are bounded by the concerns of participation, which is easier when operating in a single shard and becomes complex when multiple shards, as well as multiple receivers, are involved. The two Algorithms~\ref{algo6} and~\ref{algo7} facilitate the required mode of transactions. The choice of P2P or P2MP depends on the number of participants as receivers and the selection of the protocol for communication depends on the underlying network architecture. The inter-shard communications proposed in this article uses the hold time which is derived based on the delay factor of the VDF, and it provides an effective strategy to resolve concurrency issues. One of the examples can be the \textit{Train}-and-\textit{Hotel} booking problem~\cite{Tribble}, where two transactions are involved and synchronous mode may lead to the non-availability of one of the two requirements (hotel or a train). To control this, the proposed algorithm initiates the hold procedures for the time which is enough to prevent an adversary from performing the \textit{Eval} function and try to manipulate the block before it is verified. The hold procedures bound the receivers until the connection (tunnel) is not formed between the receivers. With such facilitation, receivers of the different shards can be accommodated to wait for a duration, which is enough to bring participants to wait until the completion of one full transaction (booking of train and hotel). However, network latency plays a crucial role and its derivation and conceptualization are beyond the scope of this article.

\begin{algorithm}[!ht]
\fontsize{8}{10}\selectfont
\SetAlgoLined
\KwResult{Intra-shard communications}
 Input: shards, sender, receivers \\
 \eIf{$valid(sender,receiver())$==$true$}{
    $fetch\_info$($shards$)\Comment*[r]{sender's call to proceed}
    set P2P or P2MP\Comment*[r]{communication modes}
    initiate $protocol$\Comment*[r]{network layer operations}
    initiate $transactions$\Comment*[r]{Begin communication}
 }
 {
 exit(-1) \Comment*[r]{invoke control procedures}
 }
 $update\_ledgers()$\Comment*[r]{ledger update and rewarding}
 \caption{Intra-shard communications.}\label{algo6}
\end{algorithm}

 \begin{algorithm}[!ht]
\fontsize{8}{10}\selectfont
\SetAlgoLined
\KwResult{Inter-shard communications}
 Input: shards, sender, receivers \\
 Set timer;\\
 \eIf{$valid(sender,receiver())$==$true$}{
    $prove\_priority$($\psi$)\Comment*[r]{Fig.~\ref{Figure6}: step 1}
    $fetch\_info$($shards$)\Comment*[r]{Fig.~\ref{Figure6}: steps 2$\sim$3}
    $rec\_info(cp\_lookup(receivers))$\Comment*[r]{Fig.~\ref{Figure6}: step 4}
    $init\_receiver\_hold(\tau.\mathcal{I}+\Gamma)$\Comment*[r]{Fig.~\ref{Figure6}: steps 5$\sim$7}
    \eIf{valid($timer$,$\tau.\mathcal{I}+\Gamma$)==$true$}{
        set P2P or P2MP\Comment*[r]{Fig.~\ref{Figure6}: steps 8$\sim$10}
        initiate $protocol$\Comment*[r]{network layer operations}
    }
    {
    timeout(-1);\\
    }
    initiate $transactions$\Comment*[r]{Fig.~\ref{Figure6}: step 11}
 }
 {
 exit(-1) \Comment*[r]{invoke control procedures}
 }
   $update\_ledgers()$\Comment*[r]{Fig.~\ref{Figure6}: step 12}
 \caption{Inter-shard communications (Fig.~\ref{Figure6}).}\label{algo7}
\end{algorithm}

%Any underlying network protocol with multi-path tracing can be used to find receivers, but it will depend on the information maintained by the valid chain-pair and the necessary states have to be followed for effective communications.
%\section{Evaluations} \label{evaluations}
%This section presents the security evaluations and performance analysis of the proposed \textit{Reinshard} blockchain.
%\subsection{Security evaluations}%Property validations
%In \textit{Reinshard} chain, the cryptographic puzzle is modeled into the following  inequation
%\begin{equation}\label{eq:11}
 %  H\left(\alpha_{m}, Z\left(h_{\rho}, h_{s} \right), \rho\right) < T.
%\end{equation}

\section{Security evaluations}
The Reinshard chains are analyzed for security properties, such as chain growth, chain quality, and common prefix similar to ~\cite{kiayias2017ouroboros,david2018ouroboros,Duong2018,Duong2017,Hcashteam2017}.To understand the security evaluations, a block's position in the chain is referred to as its height. Furthermore, in each round, at least one block will be appended to the global chain or sub-chain using (\ref{eq:11}) or (\ref{eq:11b}). In \textit{Reinshard}, the entire blockchain is divided into two distinct chains, one is the subsidiary (sub-chain) chain (i.e., PoS-consensus), another is the global chain (i.e., hybrid (PoW + PoS) consensus) that contains PoW and PoS blocks. Notably, the global chain is similar to the 2-hop chain~\cite{Duong2017} (or TwinsCoin~\cite{Duong2018}). \\%Firstly, the \textit{Reinshard}-subsidiary chain is analyzed, which is a special PoS chain. In fact, some of the existing chain-based PoS approaches~\cite{kiayias2017ouroboros},~\cite{david2018ouroboros} have analyzed the security properties (chain growth, chain quality, and common prefix) of PoS consensus. Similar to~\cite{Duong2018},~\cite{Duong2017},~\cite{Hcashteam2017}, the security properties for hybrid consensuses in \textit{Reinshard}-global chain, such as chain growth, chain quality, and common prefix, are analyzed. \\
\noindent\textbf{--Security analysis of \textit{Reinshard}-global Chain}. A chain-pair of \textit{Reinshard}-global chain, $\langle C_W, C_S\rangle$, is considered for evaluations. In order to extend the pair of \textit{Reinshard}-global blockchain, a PoW-miner needs to generate a PoW-block first, and then corresponding PoS-holders (or leader) will sign this block and generate a new PoS-block. Notably, both PoW-miners and PoS-stakeholders can be honest or malicious. Thus, the three properties are guaranteed by the following cases:\\
\noindent $\bullet$ The ideal case is both PoW-miners and PoS-stakeholders are honest which can guarantee the property of chain growth. The main reason is that malicious players cannot prevent the operations of the ideal case.\\
\noindent $\bullet$ The common case contains two distinct types. In the common case, the PoW block mined by the PoW-miner is corresponding to the PoS block and mapped to the PoS-stakeholder. The first type is that PoW-miner is malicious and PoS-holder is honest, and the honest PoS-holder will either sign the block mined by the malicious PoW-miner or discard it. Another type is that the PoW-miner is honest but the PoS-holder is malicious, here, the malicious PoS-holder will either sign the block mined by the honest PoW-miner or discard it. Notably, if the probability of the common case is smaller than the ideal case, then malicious players cannot generate more PoS-blocks than honest players. Even if they win all the competitions, there are still some blocks remaining from honest players. Apparently, the chain quality can be guaranteed by the common case.\\
\noindent $\bullet$ The worst case is both PoW-miners and PoS-stakeholders are malicious. It is assumed that the probability to find a new PoW-block by all the PoW-miners in one round is very small. Following this, if all of the honest players do not receive the new block from some rounds, they would obtain the same best chain-pair $\langle C_W, C_S \rangle$. The reason is that, in the worst case, all the honest players have the same view of the global chain-pair. The common prefix property can be guaranteed due to the reason that malicious players do not have enough resources to corrupt and diverge the view of the honest players by sending new blocks regularly. The actual architecture itself is able to secure the entire chain against known attacks as the direct inclusion of the blocks is not possible and has to be earned based on sufficient storage and computational powers, which prohibits an adversary to be a part of the global chain. However, PoS sub-chains are not aloof from such conditions as an intermittent chain-pair holder (node) may go rogue and create multiple forks by generating as many false blocks to the sub-chains, which may complete the limits on the blocks and result in a deadlock. Thus, the security of the sub-chain is required to prevent such attacks.\\
\noindent\textbf{--Security analysis of \textit{Reinshard} sub-chain}.
In essence, this sub-chain is similar to the conventional PoS chain. The key point is that the sub-chain is realized via a VDF with a time delay parameter $\tau$. To obtain a secure sub-chain, \textit{Reinshard} needs to achieve the property of, in particular, persistence and liveness. In fact, persistence follows from the properties of a common prefix (or chain consistency) and chain growth, and liveness follows from the properties of chain quality and chain growth. Intuitively, the chain growth property states that the chains of honest players should grow linearly to the number of rounds. Meanwhile, because of the use of VDF, it is required that the verification time $t_V$ is less than the evaluation time $t_E$ ($=\tau$), where $t_V < t < t_E$, which implies that the existing participants will be verifying the signed blocks in time lesser than that required by an intruder to generate the new block, which is equivalent to or greater than the evaluation time. The following properties from the existing approaches~\cite{kiayias2017ouroboros,Duong2017,Duong2018,garay2015bitcoin} help to formally understand the correct functioning of the proposed blockchain.
\begin{definition}[Chain Growth]
For all shards, each honest player finalizes chain-pairs at the end of their round $\roundr$ and has length $\mathcal{L}\geq \mathcal{L}-\mathcal{K}$ for the growth parameter $\mathcal{K}$, such that $\mathcal{L} \leq K_{max}$. Following this, each honest chain-pair must have a synchronized value of $N$.
\end{definition}
\begin{lemma}[Chain Growth Lemma~\cite{garay2015bitcoin}]
If an honest party has a chain-pairs with length $\chainlength$ at the round $\roundr$, then every honest party has adopted chain-pairs of length at least $\left(\chainlength+\sum^{\rounds-1}_{i=\roundr}K_i\right)$ ($\leq K_{max}$) by the round $\rounds\geq \roundr$ and $N$ for all parties must be same when observed from the last appended block.
\end{lemma}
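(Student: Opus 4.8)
The plan is to prove the bound by induction on the round index $\rounds$, reusing two facts already in place: the per‑round appending guarantee noted above (in every round at least one block enters the global chain or a sub‑chain via~(\ref{eq:11}) or~(\ref{eq:11b})), and the synchronous network assumption that a valid chain‑pair diffused by an honest party in round $i$ is held by every honest party at round $i+1$, who then adopts it because honest parties keep the longest valid chain‑pair. Throughout, $K_i$ in the statement is read as the number of (sub‑)blocks contributed at step/round $i$, consistent with the decomposition $N=\sum_{i=1}^{M}K_i$; the goal is both the lower bound on the adopted length and the parenthetical cap by $K_{max}$, together with the synchronization of $N$.

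First I would fix the honest party $P$ with a valid chain‑pair $\langle C_W,C_S\rangle$ of length $\chainlength$ at round $\roundr$. Validity in the sense of Definition~\ref{definition_block} (each PoW block carries at least one PoS block) guarantees that no honest party rejects this chain‑pair, so by round $\roundr+1$ every honest party holds a chain‑pair of length at least $\chainlength$; this is the base case (the case $\rounds=\roundr$ being trivial). For the inductive step, assume that by round $i$, with $\roundr\le i<\rounds$, every honest party holds a chain‑pair of length at least $\chainlength+\sum_{j=\roundr}^{i-1}K_j$. In round $i$ the leader elected through the VDF — the validator satisfying~(\ref{eq:11b}) — appends its PoS block(s), and the step contributes $K_i$ (sub‑)blocks by the construction of Section~\ref{chain_growthsection}; because \textit{Eval} is sequential and deterministic with $t_V<t<t_E$, a malicious chain‑pair holder cannot suppress or pre‑empt this extension (doing so would require evaluating \textit{Eval} in time below $\tau$). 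Diffusing the extended chain‑pair then forces every honest party to length at least $\chainlength+\sum_{j=\roundr}^{i}K_j$ by round $i+1$, which closes the induction and yields the claimed bound at round $\rounds$.

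It remains to handle the two side conditions. For the cap $\chainlength+\sum_{j=\roundr}^{\rounds-1}K_j\le K_{max}$, I would invoke the admissible range of each $K_i$: by~(\ref{eq:4}) and the RL update a sub‑chain can grow in one step by at most $K_{i,t}^{(A)}=G_a/(\eta_s G_e)$, and the sub‑chain difficulty adjustment~(\ref{eq:13}) throttles cumulative growth once $\sum_{i\ne j}(K^{(A)}_{x})_{j}$ drops below the demand $K^{(R)}$, so no honest chain‑pair can exceed $K_{max}$. The synchronization of $N$ follows from the common‑prefix argument in the preceding sub‑chain analysis: all honest parties share the same view of the global chain‑pair measured from the last appended block, and since $N=\sum_{i=1}^{M}K_i$ is determined by that shared prefix, every honest party computes the same $N$ from the last appended block. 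I expect the main obstacle to be the adversarial part of the inductive step — rigorously ruling out a rogue chain‑pair holder stalling the per‑round sub‑chain extension — which is exactly where the VDF time‑gap $t_V<t_E$ and the leader‑election inequality~(\ref{eq:11b}) must be wielded together; the two side conditions should reduce to routine bookkeeping over~(\ref{eq:4}), (\ref{eq:13}) and the common‑prefix property.
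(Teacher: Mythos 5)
Your proof follows essentially the same route as the paper's: induction on the round index, with the base case handled by the broadcast of the honest chain-pair before round $\roundr$ and the inductive step by honest parties extending and diffusing a chain of length at least $\chainlength'+K_i$ by the next round. The only differences are minor — the paper restricts the inductive step to the indicator case $K_{\rounds-1}\in\{0,1\}$ whereas you allow a general per-round contribution $K_i$, and you supply explicit justifications (via (\ref{eq:4}), (\ref{eq:13}) and the common-prefix property) for the cap $\leq K_{max}$ and for the synchronization of $N$, both of which the paper merely asserts.
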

% Below, we focus on the successful blocks when calculating the real length of a PoS-chain.
\begin{proof}
By induction on $\rounds-\roundr\geq 0$, and assuming the basis $(\rounds=\roundr)$, if an honest party has a chain $\chainC$ with length $\chainlength$ at the round $\roundr$, then the party broadcasts $\chainC$ at a round earlier than $\roundr$. It follows that every honest party will receive $\chainC$ by the round $\roundr$. For the inductive step, according to the inductive hypothesis, every honest party has received a chain of length at least $\chainlength'=\chainlength+\sum_{i=\roundr}^{\rounds-2} K_i$ by the round $\rounds-1$. Obliviously, in this setting, if $K_{\rounds-1} = 0$ the statement follows directly, so assume $K_{\rounds-1}= 1$. Notably, $K_i$ implies that the expectation of the block number mined by an honest player after $i$ rounds. Furthermore, it is to be noted that every honest party can query the valid chain-pair with a chain of length at least $\chainlength'$ and $N'$ at the round $\rounds-1$. It follows that all honest parties successful at the round $\rounds-1$ broadcast a chain of length at least $\chainlength'+1$ and sum $N$.
Since $\chainlength'+1=\sum_{i=\roundr}^{\rounds-1}K_i$, and $N'$ is the same for all, it completes the proof.
\end{proof}
The chain quality property guarantees that there will eventually be a block in the finalized chain-pair that was proposed by an honest player subject to the limits imposed by $K_{max}$. In other words, the property of chain quality aims at expressing the number of honest blocks' contributions that are contained in a sufficiently long and continuous part of an honest chain. Here, $\Upsilon$ is used to define the stakes ratio of the adversaries, $\varrho$ is used to define the stakes ratio of the honest holds and $\epsilon$ acts as the system parameter, and $\epsilon\in \{0,1\}$.
\begin{definition}[Chain Quality~$\Cq$~\cite{garay2015bitcoin}]
The chain quality $\Cq$ with parameters $\varrho \in R$ and $\chainlength\in \mathcal{N}$ state that for any honest party $P$ with chain $\chainC$, it holds that for any $\chainlength$ consecutive blocks of $\chainC$ the ratio of honest blocks is at least $\varrho$.
\end{definition}
\begin{theorem}[Chain Quality~\cite{garay2015bitcoin}]
Let $\astkratio-\sysp$ be the adversarial stake ratio. The protocol satisfies the chain quality property with parameters $\varrho\cdot (\astkratio-\sysp)=\astkratio/(1-\astkratio)$ and $\chainlength\in \mathcal{N}$ through an epoch of $R$ slots with probability at least
\begin{equation*}
  1-\exp(-f(\sysp^2\cdot (\astkratio\cdot\chainlength))+\ln R),
\end{equation*}
s.t.
\begin{equation}
    K_{i}^{(A)} \leq K_{max}-\chainlength.
\end{equation}
\end{theorem}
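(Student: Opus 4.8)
The plan is to adapt the chain-quality argument of Garay et al.~\cite{garay2015bitcoin} to the \emph{Reinshard} sub-chain, leaning on the Chain Growth Lemma already established and on the VDF timing separation $t_V<t<t_E=\tau$. First I would fix an honest party $P$ holding chain-pair $\chainC$ and an arbitrary window of $\chainlength$ consecutive blocks inside $\chainC$; by the Chain Growth Lemma this window is produced over $O(\chainlength)$ consecutive rounds (at most $\chainlength$ rounds, and, using $\chainlength\leq K_{max}$, at least a constant fraction of that), so it suffices to control the block contributions across $O(\chainlength)$ slots of the epoch. Writing $X_i$ for the indicator that slot $i$ yields an \emph{adversarial} PoS block, i.e.\ a corrupted validator solving~(\ref{eq:11b}), and $Y_i$ for the honest analogue, the leader-election rule~(\ref{eq:11b}) makes $\Pr[X_i=1]$ proportional to the adversarial stake ratio $\astkratio-\sysp$ and $\Pr[Y_i=1]$ proportional to the honest stake ratio $\varrho$, the two normalized through the identity $\varrho\cdot(\astkratio-\sysp)=\astkratio/(1-\astkratio)$.

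The core step is a Chernoff/Hoeffding concentration bound comparing $\sum_i X_i$ with $\sum_i Y_i$ over the window. Because the VDF forces sequential evaluation --- an adversary needs time $\geq t_E=\tau$ per attempt while honest verification costs only $t_V<\tau$ --- a corrupted party cannot amortize work across slots, so the $X_i$ are stochastically dominated by independent Bernoulli trials with the stated parameter; a multiplicative Chernoff bound then gives that the adversarial fraction in the window exceeds $1-\varrho$ with probability at most $\exp\bigl(-f(\sysp^2\cdot(\astkratio\cdot\chainlength))\bigr)$, the universal constant being absorbed into $f(\cdot)$. A union bound over the at most $R$ possible starting slots of a length-$\chainlength$ window in the epoch contributes the additive $\ln R$ term, which yields exactly $1-\exp(-f(\sysp^2\cdot(\astkratio\cdot\chainlength))+\ln R)$.

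It then remains to discharge the side condition $K_{i}^{(A)}\leq K_{max}-\chainlength$. Since~(\ref{eq:4}) caps at $K_{i}^{(A)}$ the number of extra PoS blocks that a winning chain-pair may absorb, and the Chain Growth Lemma caps total sub-chain length at $K_{max}$, the inequality $K_{i}^{(A)}+\chainlength\leq K_{max}$ guarantees that a full length-$\chainlength$ honest window fits inside the winning pair's sub-chain without forcing a delegation, so the concentration estimate is non-vacuous. The hard part will be making the domination claim for the $X_i$ fully rigorous --- showing that VDF sequentiality together with the locally-hidden allocation choice of the leader really does reduce the adversary to near-independent per-slot trials --- and pinning down the precise form of $f(\cdot)$ and the constant hidden in the parameter identity $\varrho\cdot(\astkratio-\sysp)=\astkratio/(1-\astkratio)$; both require carefully accounting for the pseudo-block structure, which makes the block counting genuinely more delicate than in the plain Bitcoin-backbone analysis.
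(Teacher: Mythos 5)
Your proposal follows essentially the same route as the paper's proof: both arguments count honest-leader versus adversarial-leader slots over a window of $\chainlength$ blocks, invoke the chain-growth concentration to obtain at least $(1-\astkratio)\cdot\chainlength$ honest slots and at most $\astkratio\cdot\chainlength$ adversarial ones, and conclude an adversarial block fraction of $\astkratio/(1-\astkratio)$ with the stated probability, tying in the side condition $K_{i}^{(A)}\leq K_{max}-\chainlength$ in the same way. You merely make explicit the Chernoff bound and the union bound over the $R$ window positions (the source of the $\ln R$ term) that the paper leaves implicit in its appeal to the chain-growth proof.
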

\begin{proof}
From the proof of chain growth, it is known that with high probability a segment of $\chainlength$ will involve at least $(1-\astkratio)\cdot\chainlength$ slots with honest leaders; hence the resulting chain must advance by at least $(1-\astkratio)\cdot\chainlength$ blocks, which is $\leq$ $K_{i}^{(A)}$. Similarly, the adversarial parties are associated with no more than $\astkratio\cdot \chainlength$ slots, and thus can contribute no more than $\astkratio\cdot \chainlength$ blocks to any particular chain over this period. It follows that the associated chain possessed by any honest party contains a fraction $\astkratio/(1-\astkratio)$ of adversarial blocks with probability $1-\exp(-f(\sysp^2\cdot (\astkratio\cdot\chainlength))+\ln R)$.
\end{proof}
\begin{definition}[Common Prefix~\cite{kiayias2017ouroboros,garay2015bitcoin}]
The common prefix (or chain consistency) implies that if $\chainC$ and $\chainC'$ are the finalized chains of two honest players, then $\chainC$ is a prefix of $\chainC'$ or vice versa at any point of time.
\end{definition}
\begin{lemma}[Common Prefix Lemma~\cite{kiayias2017ouroboros,garay2015bitcoin}]
If $\chainC_{1}$ is adopted by an honest party at round ${\roundr}_{,1}$, and $\chainC_{2}$ is either adopted or diffused (broadcast) by an honest party at round $\roundr$ and has $\strlen(\chainC_{2})\leq \strlen(\chainC_{1})$, then $\chainC_{1}$ is a prefix of $\chainC_{2}$ or vice versa at any point of time for consecutive rounds.
\end{lemma}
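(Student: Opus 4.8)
The plan is to argue by contradiction, following the template of the Bitcoin-backbone common-prefix proof~\cite{garay2015bitcoin}, but instantiated with the VDF delay that underlies the \textit{Reinshard} sub-chain. Suppose the claim fails: then neither $\chainC_{1}$ is a prefix of $\chainC_{2}$ nor conversely, so the two chain-pairs share a longest common prefix ending at some block $B$ of height $h_{0}$, above which they diverge into disjoint suffixes. Since $\strlen(\chainC_{2})\leq \strlen(\chainC_{1})$, the suffix of $\chainC_{1}$ hanging off $B$ has length $\ell \geq \strlen(\chainC_{2})-h_{0}\geq 1$. The goal is to show that this divergent suffix must contain a block that every honest party has seen, which is impossible because that block would then lie in both chains above $B$.

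First I would locate an honestly generated block inside the divergent suffix of the longer chain. By the Chain Growth Lemma, shortly after round ${\roundr}_{,1}$ every honest party holds a chain-pair of length at least $\strlen(\chainC_{1})$, and by the Chain Quality theorem any window of $\chainlength$ consecutive blocks of an honest chain contains a fraction at least $\varrho$ of honestly generated blocks, provided the window stays within the sub-chain bound ($K_{i}^{(A)}\leq K_{max}-\chainlength$ as in that theorem). Taking the divergent segment long enough forces at least one honest block $B^{*}$ above $B$ in $\chainC_{1}$; let $r^{*}\leq\roundr$ be the round at which $B^{*}$ was appended and diffused.

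Next I would derive the contradiction from the diffusion of $B^{*}$ together with the VDF time barrier. Because $B^{*}$ is honest it reaches all honest parties within the latency window $\Gamma$, so by round $r^{*}+1$ every honest party — including the one adopting $\chainC_{2}$ at round $\roundr$ — holds a chain-pair containing $B^{*}$ on its main branch. Here the VDF inequality $t_{V}<t<t_{E}(=\tau)$ is essential: an adversary wishing to present a competing branch above $B$ that omits $B^{*}$ would have to have evaluated the delay function for that branch in time strictly less than $\tau$, i.e. faster than honest verification, which sequentiality of the VDF forbids. Hence $\chainC_{2}$ contains $B^{*}$, contradicting $B^{*}$ lying strictly above the last common block $B$; therefore one of $\chainC_{1},\chainC_{2}$ is a prefix of the other, and since the argument applies to every pair of consecutive rounds with ${\roundr}_{,1}\leq\roundr$, the prefix relation persists at any point in time. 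The main obstacle is the counting step: bounding the length of the divergent suffix so that Chain Quality guarantees an honest block while respecting the $K_{max}$ cap on sub-chain length, and checking that the effective adversarial stake ratio still satisfies $\astkratio-\sysp<1/2$ (so $\varrho>0$) once the pseudo-block inflation and the delegation mechanics of Algorithm~\ref{algo_rl} are accounted for.
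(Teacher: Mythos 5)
Your proposal is far more ambitious than what the paper actually provides --- the paper's own ``proof'' is a two-sentence assertion that if honest players receive different chains at different intervals then $\chainC_{1}$ and $\chainC_{2}$ are both prefixes of a common chain, plus a remark that sub-chain validity follows from the honesty of the valid chain-pair; it does not carry out (or even sketch) the backbone-style counting argument you are attempting. So you are on a genuinely different route, and in principle the right one, but there is a concrete gap in your central step. You argue that because the honest block $B^{*}$ is diffused within the latency window $\Gamma$, ``by round $r^{*}+1$ every honest party \ldots holds a chain-pair containing $B^{*}$ on its main branch.'' Receiving a block is not the same as adopting it: honest parties adopt the longest (or highest-scoring) valid chain-pair they see, and nothing prevents an adversary from presenting a longer competing branch above $B$ that omits $B^{*}$. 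Your appeal to the VDF to close this hole does not work as stated --- the adversary does not need to evaluate the delay function ``in time strictly less than $\tau$''; it only needs to have been extending its private branch over earlier rounds, paying $\tau$ per block just like everyone else. Sequentiality of the VDF rate-limits both sides symmetrically; it does not by itself rule out a long privately-built fork. That is precisely why the Garay et al.\ proof does not argue from a single honest block but from a pairing between \emph{uniquely successful} honest rounds in the divergence window and distinct adversarial blocks, concluding that a deep fork forces the adversary to produce at least as many blocks as the honest uniquely successful rounds, which is negligible under the majority assumption.

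You partially acknowledge this in your closing sentence (``the main obstacle is the counting step''), but that counting step \emph{is} the proof; without it the argument reduces to the unsupported claim that an honest block, once diffused, ends up on every honest main branch. To repair the proposal you would need to (i) define uniquely successful rounds for the \textit{Reinshard} setting (a round in which exactly one honest validator wins the leader election of (\ref{eq:11b})), (ii) show each such round in the divergence window contributes a block at a height that any competing chain must also fill with an adversarial block, and (iii) bound the adversarial block count in that window using the stake ratio $\astkratio-\sysp$ together with the $K_{max}$ cap and the pseudo-block/delegation mechanics you mention. As it stands, the diffusion-implies-adoption step would fail against a standard private-fork adversary, so the contradiction is not yet established.
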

\begin{proof}
In the sharded blockchain, if the honest players receive different chains for different intervals, both the chains, i.e., $\chainC_{1}$ and $\chainC_{2}$ are the prefix of a common chain that proves the Lemma. Additionally, the validation of PoS sub-chain can be guaranteed in any $\roundr$ following the honesty of the valid chain-pair, which ensures the correctness of the sub-chains.
\end{proof}
\begin{definition}[Chain Wait]
For the sharded blockchain, this property ensures that only valid chain-pair generates the blocks (equal opportunity to all participants) for the inter-shard communications, and minimum wait for the concurrent operations has been followed. This is guaranteed by the fact that each valid party must advertise $t_V$ and $\forall$ $\roundr$, $t_V < t_{E}$, which is known to all the chain-pairs.
\end{definition}
\begin{lemma}[Chain Wait Lemma]
For a given inter-shard communications, if two different $\tau_{1}$ and $\tau_{2}$ are observed from valid chain-pair and the intended receiver by the sender (initiator), both $\tau_{1}$ and $\tau_{2}$ are equal in a valid blockchain and  $t_V < (\tau_{1}=\tau_{2})  <t_{E}$.
\end{lemma}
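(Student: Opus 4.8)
The plan is to derive the equality $\tau_1=\tau_2$ from the observation that the delay a valid chain-pair must spend is fixed entirely by the \emph{public} VDF parameters, and then to read off the bound $t_V<\tau_1=\tau_2<t_E$ directly from the sequentiality requirement that the Chain Wait definition places on every valid party.

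First I would unwind the VDF $\mathcal{W}=(\textit{Setup},\textit{Eval},\textit{Verify})$ used for the sub-chain. Its public parameters $(\mathcal{E}_k,v_k)$ are the output of $\textit{Setup}$ on input $(\lambda,\zeta)$ with $\zeta=f(\tau,\widetilde{T})$, so the evaluation time $t_E=\tau$ and the verification time $t_V$ are determined solely by $\lambda$, the current PoS target $\widetilde{T}$, and the pseudo-pair rate $\alpha_m$ feeding~(\ref{eq:11}) and~(\ref{eq:11b}). For a \emph{valid} blockchain the common-prefix lemma guarantees that all honest parties share the same view of this global state, so in any round $\roundr$ the numbers $t_V<t_E$ form a single pair ``known to all the chain-pairs'' --- precisely the hypothesis recorded in the Chain Wait definition.

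Next I would trace Algorithm~\ref{algo7}: after proving its priority $\psi$, the sender obtains (i) from the winning (valid) chain-pair that serves/traces the receiver the delay $\tau_1$ that gates its $\textit{Eval}$, and (ii) from the intended receiver the hold-delay $\tau_2$ fed to $init\_receiver\_hold$. Both are the VDF-induced wait derived from the same $(\lambda,\zeta)$; the trace count $\mathcal{D}$ and the latency $\Gamma$ enter the sender's view symmetrically for both parties and hence do not affect the comparison of the intrinsic delays. If $\tau_1\neq\tau_2$, one of the two honest parties would be running $\textit{Eval}$/$\textit{Verify}$ under parameters inconsistent with the agreed-upon global state, contradicting validity; therefore $\tau_1=\tau_2$. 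The sandwich then follows because the sub-chain VDF is required to satisfy $t_V<t<t_E$ with $t$ instantiated by the effective honest wait, which is exactly this common value; hence $t_V<(\tau_1=\tau_2)<t_E$.

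The step I expect to be the main obstacle is (i)--(ii): showing that an honest receiver's hold-delay is \emph{forced} to coincide with the leader's VDF delay rather than chosen freely. I would close this by noting that $init\_receiver\_hold$ takes its argument from the same publicly-derived $\tau$ that gates PoS-block generation in~(\ref{eq:11b}), and by appealing once more to the common-prefix property to rule out divergent local views among honest parties. The remainder --- cancelling $\mathcal{D}$ and $\Gamma$ and establishing the strictness of the two inequalities --- is routine bookkeeping already handled in the sub-chain security analysis.
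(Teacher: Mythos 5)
Your argument is correct and follows essentially the same route as the paper's proof: both reduce the claim to the observation that in a valid chain the sender-side and receiver-side wait times cannot diverge, and then read the bounds $t_V<(\tau_1=\tau_2)<t_E$ off the VDF sequentiality requirement. The only cosmetic difference is the mechanism invoked for the equality --- the paper pins it on the wait time being advertised alongside the signed block (so a mismatch would be caught at validation), whereas you derive it from the determinism of the publicly known VDF parameters together with the common-prefix property; both justifications operate at the same informal level of rigor and neither changes the structure of the proof.
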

\begin{proof}
Consider a scenario where a sender has initiated a request for two different nodes having locations in either the same or different shards. Now, the wait request for each query is a time $t_{1}$, which can be either decided by the sender or its corresponding chain-pair (depending on the mode of deployment and configurations). Now, a wait time $t_{2}$ corresponds to the intended receivers, and then a P2P or P2MP connection is initiated. The block signing is accompanied by advertising the $t_{2}$. For a valid chain, both $t$ and $t'$ are same and must be following   $t_V < (\tau_{1}=\tau_{2})  <t_{E}$. This ensures equal waiting for all the involved nodes and prevents intentional termination of connections when inter-shard communications are involved.
\end{proof}
\begin{definition}[Unbiased Sharding]
We say this is an \textit{Unbiased Sharding} in the sharded blockchain, if it satisfying the following requirements: 1). the process of generating the shards should be unpredictable and must not be controlled (or manipulated) by any single node, 2). the knowledge of nodes in the shards should not be predicted.
%For the sharded blockchain, this property ensures that the process of generating the shards should be unpredictable and must not be controlled or manipulated by a single node. Furthermore, the knowledge of nodes in the shards should not be predicted.
\end{definition}

\begin{lemma}[Unbiased-Sharding Lemma]
If an adversary $\mathcal{A}$ becomes part of the chain as a chain-pair, $\langle C_{W}, C_{S}\rangle$, or sub-chain $\langle C_{S}\rangle$, then no adversary can decide the shard-participants with an overwhelming probability.
\end{lemma}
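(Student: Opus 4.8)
The plan is to treat the two requirements of the \emph{Unbiased Sharding} definition separately and, in each case, reduce a successful shard-predicting adversary to a violation of one of the properties already established for \emph{Reinshard}: chain quality, the VDF guarantees of Boneh et al.~\cite{Boneh2018}, or the hiding of the RL metrics. First I would fix the adversary model: $\mathcal{A}$ controls a (possibly pseudo) chain-pair $\langle C_{W}, C_{S}\rangle$ or a sub-chain $\langle C_{S}\rangle$, runs in time polynomial in $\lambda$, cannot evaluate the VDF \textit{Eval} in fewer than $\tau$ sequential steps, cannot find collisions for $H$ or $\widetilde{H}$, and observes only the public chain-pairs and the advertised reward values $R'(\omega,\theta)$ and verification times $t_{V}$ — not the private metrics $\eta_{s},G_{a},G_{e},S_{a},\omega,\theta$ of honest chain-pairs. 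Recall from Section~\ref{chain_growthsection} that the leader of a round is whichever valid pair satisfies (\ref{eq:11b}), whose threshold $R'(\omega,\theta)\,\widetilde{T}$ depends on those hidden metrics, and that the concrete slot an accepted PoS block occupies is fixed by the leader's \emph{local} input $\mathcal{I}=H(h_{q},h_{g})$ together with the allocation rules of Algorithm~\ref{algo_rl}, the accommodation bound (\ref{eq:4}), and the sub-chain difficulty rule (\ref{eq:13}).

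For the requirement that the membership of a shard cannot be predicted, I would argue by contradiction. If $\mathcal{A}$ outputs shard membership with probability non-negligibly above guessing, then from its view it has effectively predicted (i) the leader of the relevant round and (ii) the allocation that leader performs. Step (i) contradicts either the hiding of the metrics feeding $R'(\omega,\theta)$ or — in the case of an honest leader — the Chain Quality Theorem, which forces a fraction $1-\astkratio$ of leaders over any window of $\mathcal{L}\le K_{max}$ blocks to be honest, so $\mathcal{A}$ cannot guarantee to be leader on the rounds of interest; the difficulty adjustment (\ref{eq:12}) reinforces this by deflating $T$ for any pair whose $\eta_{x}$ and reward grow disproportionately, blocking a single-pair takeover. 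Step (ii) contradicts the sequentiality of the VDF and collision-resistance of $H$: the placement depends on $\mathcal{I}=H(h_{q},h_{g})$ and on the VDF output $\mathcal{O}$, which is unavailable to $\mathcal{A}$ until $\tau$ steps have elapsed — i.e. after verification, since $t_{V}<\tau=t_{E}$ by the Chain Wait Lemma.

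For the requirement that the shard-generation process is unpredictable and not single-node-controlled, the same two facts deliver unpredictability directly from VDF sequentiality, and non-manipulability from the observation that a PoW-PoS pair becomes \emph{valid} — hence eligible to host PoS blocks and therefore to take part in shard formation — only after earning storage and stake via (\ref{eq:4}) and surviving the common-case/worst-case analysis of the global chain, so no single node can unilaterally steer assignments. Delegation, modelled by the Q-learning recursion around (\ref{eq:8})--(\ref{eq:10}) with $P_{x}=\gamma$, only redistributes incoming blocks among already-valid pairs and so introduces no new bias; I would note this explicitly to close the case where $\mathcal{A}$ tries to predict membership through forced delegations.

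The main obstacle I expect is formalising ``$\mathcal{A}$ decides the shard-participants with overwhelming probability'' as a clean prediction/distinguishing game and then pinning down exactly which primitive each winning strategy breaks — in particular, turning the informal claim that the RL rewards never reveal node configurations into a precise statement (ideally: the map from private metrics to the advertised $R'(\omega,\theta)$ is sufficiently many-to-one that the posterior distribution over leaders stays statistically close to uniform), so that the only genuine cryptographic work is the reduction to VDF sequentiality and collision-resistance. A secondary subtlety is the batched/parallel branch of Algorithm~\ref{algo_rl}, where the tie-breaks via $S_{a}$, $\omega$, $\theta$ and the FCFS/decreasing-memory ordering must be shown not to localise a node; I would handle this by folding those tie-break quantities into the same hidden-metric bundle and invoking the hiding argument once more.
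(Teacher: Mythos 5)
Your proposal is correct relative to the paper's own (informal) standard and takes essentially the same route: both arguments hinge on the facts that block placement is decided by the winning chain-pair and verified by the other valid pairs, that the bounds on $K$ are governed by RL-rewards requiring validation, and that the VDF's sequentiality gap $t_{V}<t_{E}$ makes it infeasible for $\mathcal{A}$ to control or predict the allocation in time. Your version is merely more structured — the explicit reduction framing, the appeal to chain quality and collision-resistance, and the treatment of delegation all go beyond the paper's one-paragraph argument, but they elaborate rather than replace its core VDF-sequentiality reasoning.
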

%This lemma means the number of possible participants, $K$, must not be pre-decided and allocation should not be predictable.
\begin{proof}
In the sharded blockchain, the architecture is built such that it may grow or shrink for sub-chain based on the dynamic difficulty adjustment, which is unpredictable as stated in the chain-extension. The bounds on $K$ are governed by the RL-rewards and must be validated by the stakeholders. Thus, no chain-pair can extend it without validation. Once these are made, the state of the blockchain can be known to every participant. Now, the sharding is carried based on the location of the PoS block generated by a node. The node can generate any blocks, but the attachment is controlled by the winning chain-pair and verified by the valid chain-pairs. This means to bias the sharding, the adversary ($\mathcal{A}$) must be able to affect the verification and control the decision on sharding which is against the working of the VDF as the sub-node cannot present the computational requirement of being a part of the PoW-PoS chain pair. Even if it manages to show the same level of computational power, it has to solve the VDF puzzle under the verification time ($t_{V}$), which is practically impossible due to sequential steps in evaluation (governed by $t_{E}$) which have to be unique and need to be publicly verified. Thus, unbiasedness is guaranteed through the procedures of VDF used for extending the PoW and the PoS chains.
\end{proof}
\begin{remark}[Chain Availability]
In the case of \textit{Reinshard}, chain availability refers to the all-time accessibility of the node information despite the occurrence of failures in the targeted applications. It also includes the possession of information of sub-chain of the chain-pairs which have failed or inaccessible in any round.
\end{remark}
\begin{remark}[Non-cascading failures]
For general failures, the inclusion of the pointer to the PoS sub-block of the failed chain-pair can help to recover the entire sub-chain. However, in practical situations, a single server is the chain-pair generator, thus, the recovery is based on the condition that sharding information is available to all the neighboring servers. Additionally, in \textit{Reinshard}, the sharding helps to maintain the recovery of non-cascading failures by allowing sharded nodes to provide the information for the lost or unavailable nodes or even chain-pairs. Scenarios, where external nodes generate more blocks and lead to different shards, have better accessibility in the case of failures than the scenarios with only one block per node.
\end{remark}
\begin{remark}[Cascading failures]
In the case of cascading failures, even the nodes with a different number of blocks may fall short of recovering as consecutive failures affect the recovery. However, with \textit{Reinshard}, the scenario with multiple blocks is sharded as per the external nodes, which helps to retain maximum information and allow better recovery. At present, \textit{Reinshard} is able to detect failures with pseudo-chain-pairs and the major of other operations are evaluated by assuming that the generators of the chain-pairs are always available and accessible.
\end{remark}
\section{Performance evaluation and comparisons}
\textit{Reinshard} was evaluated on Intel\textsuperscript{\textregistered} Core\texttrademark i7-8750H CPU $@$ 2.20 GHz on a Dell G7 series workstations using instances from NoobChain~\cite{noobchain} with concurrent clients coded in Node.js\textsuperscript{\textregistered} which allows visualization through Chrome's JavaScript engine. The evaluations are presented in two parts, the first part helps to understand the run time operations of the proposed blockchain especially for sharding. The second part discusses the importance of using VDF by considering an exemplary concurrency problem (\textit{Train}-and-\textit{Hotel} booking problem~\cite{Tribble}).
\begin{figure}[!ht]
    \centering
   \begin{minipage}[b]{.5\linewidth}
        \centering
        \includegraphics[width=1\linewidth]{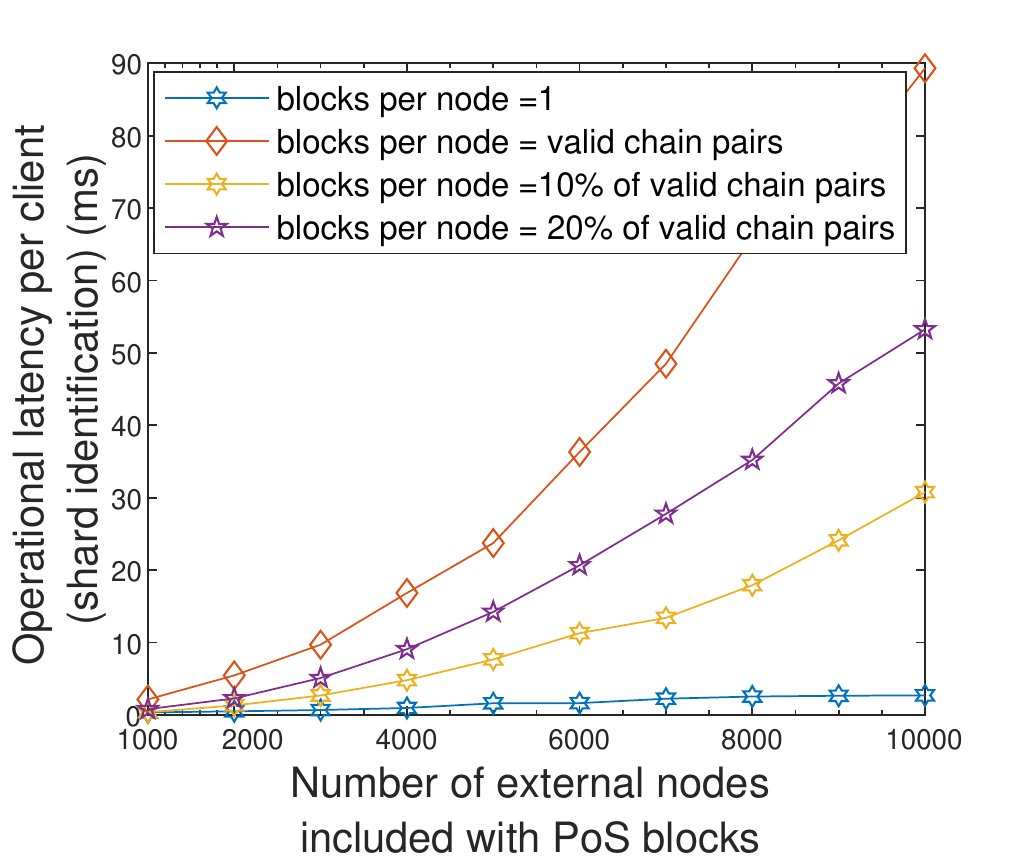}\\
        (a)\label{g1}
    \end{minipage}%
    \begin{minipage}[b]{.5\linewidth}
    \centering
        \includegraphics[width=1\linewidth]{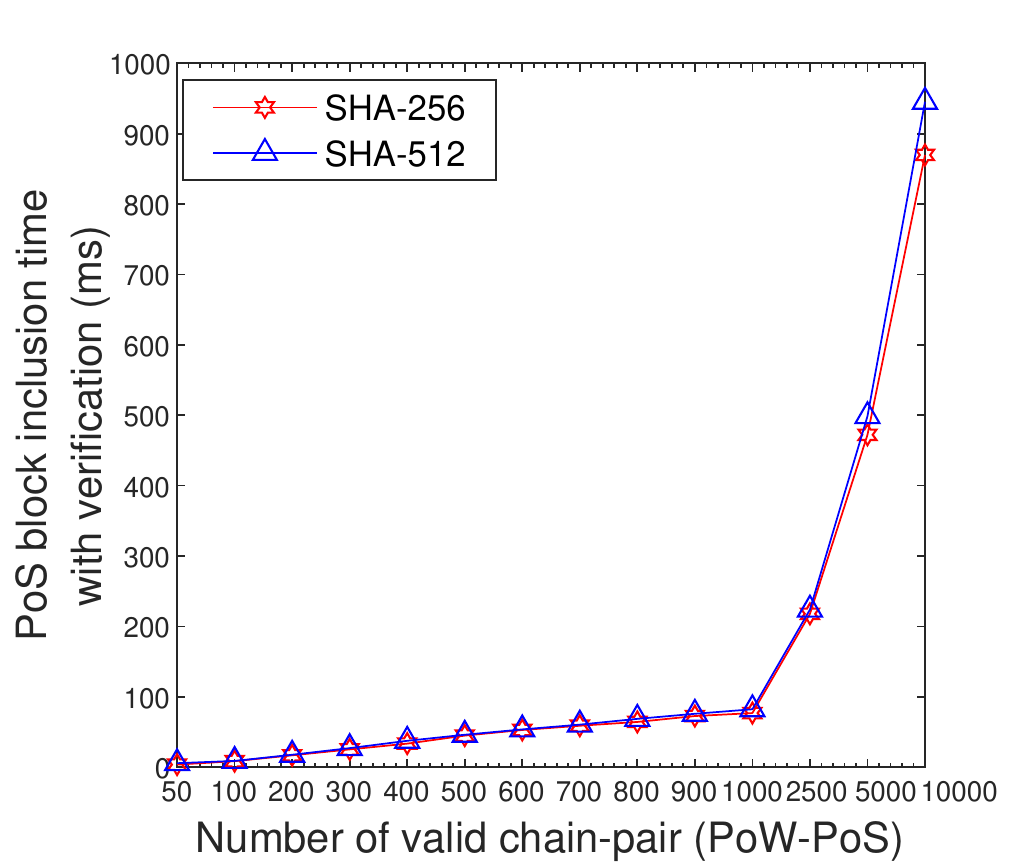}\\
        (b)\label{g2}
    \end{minipage}
    \caption{(a) Operational latency per client (shard identification) vs. the number of external nodes. (b) PoS block inclusion time vs. the number of valid chain-pairs.}\label{mg1}
\end{figure}
The time to shard and allocation decisions account for latency for which each node must wait before processing/generating new blocks in the blockchain. In the proposed Reinshard, the sharding is performed optimally by its architecture, which causes very few overheads that do not show a major impact on the performance. The result in Fig.~\ref{mg1}(a) suggests that an application-specific scenario (Scenario-1), where a node can be a part of only one shard, is efficient as sharding is done at the chain-pair, whereas for a scenario where a node can belong to multiple shards (Scenario-2 and Scenario-2') sharding may cause additional overheads, but these are well in the limits and do not affect the functioning of the entire blockchain. The increment of 20\% in the total blocks per node can increase the overheads by 46.6\%. However, the maximum range of latency is quite low at 89.32 ms. Furthermore, belonging to multiple shards enhances the types of applications for which \textit{Reinshard} can be used as well as help in maintaining high availability and accessibility in case of failures. These results suggest that the number of blocks per node impacts the performance as more overheads are observed in deciding the members of the shards. More the number of involved shards more is the participation, and higher are the overheads. However, these overheads grow at a rate of 4.1\% only when the number of shard occurrences is doubled for each node. With more valid chain-pairs, the number of verification and signing increases that increase the overall appending time of the PoS block, as shown in Fig.~\ref{mg1}(b). However, during evaluations, it was found that the number of blocks per external node does not impact the average generation time. Rather, the number of validators has more impact on it. Alongside, the signature size also causes some overheads, but results vary only by 6.2\% between SHA-256 and SHA-512 at $\pm$ 2 (ms) error for 100 runs.

%\begin{figure}[!ht]
%\begin{center}
 %   \includegraphics[width=1\linewidth, trim = {5cm 6cm 7cm 5cm}, clip]{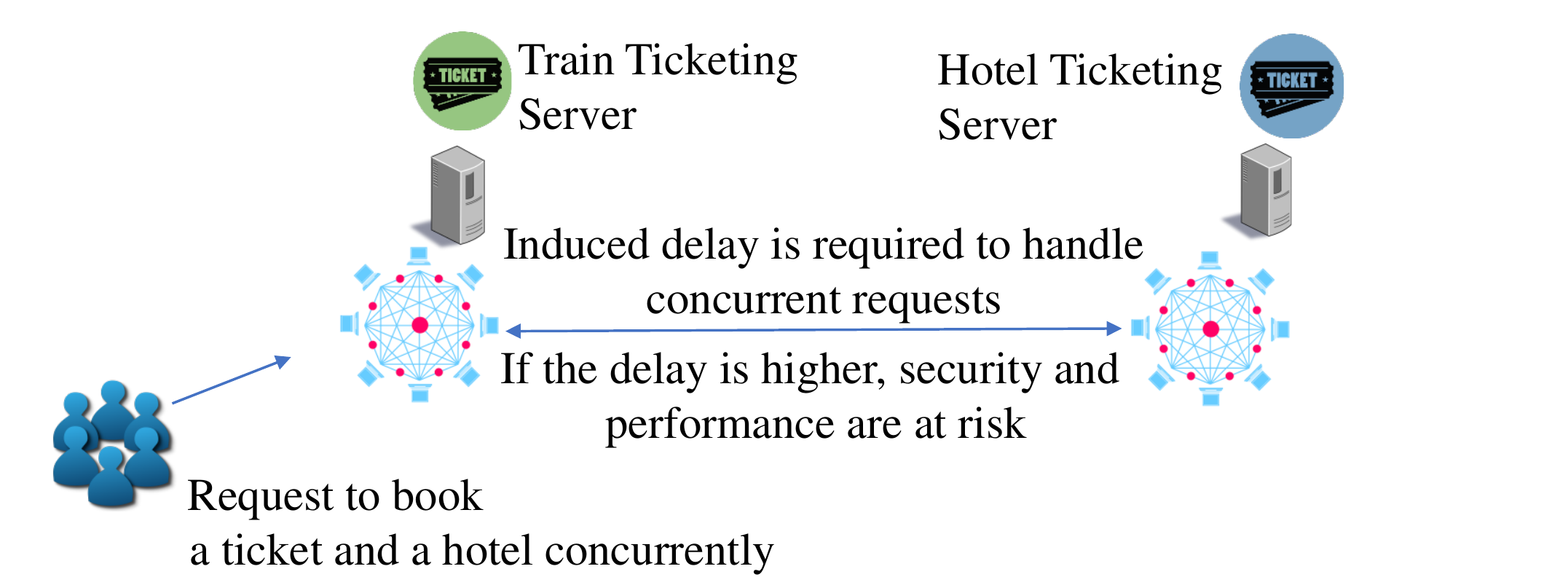}
 %   \end{center}
%\caption{An illustration of concurrency problem in the blockchain.}
%\label{Darkpool}
%\end{figure}
\begin{figure}[!ht]
    \centering
   \begin{minipage}[b]{.5\linewidth}
        \centering
        \fbox{\includegraphics[width=0.9\linewidth]{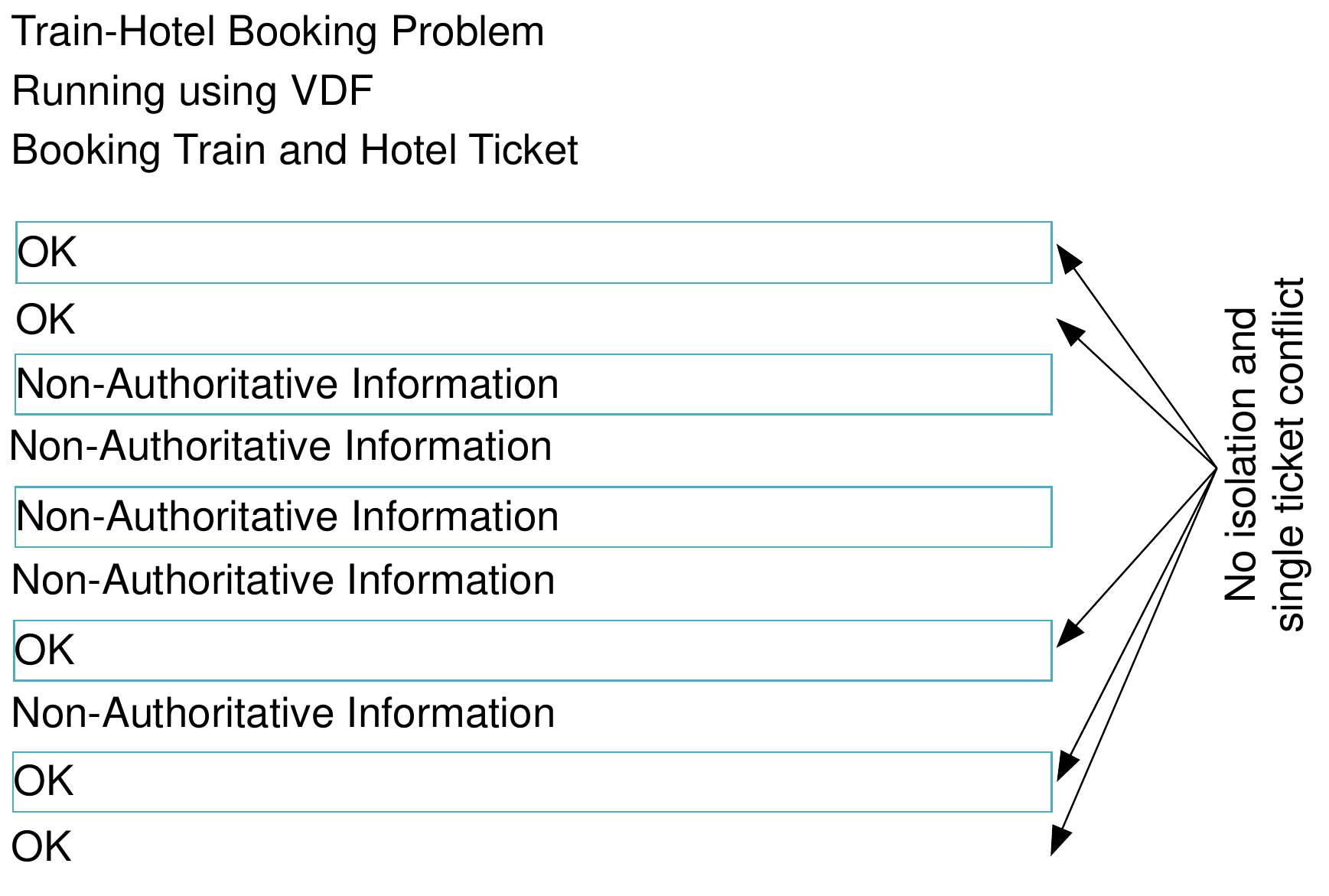}}\\
       (a)\label{g3}
    \end{minipage}%
    \begin{minipage}[b]{.5\linewidth}
    \centering
        \fbox{\includegraphics[width=0.92\linewidth]{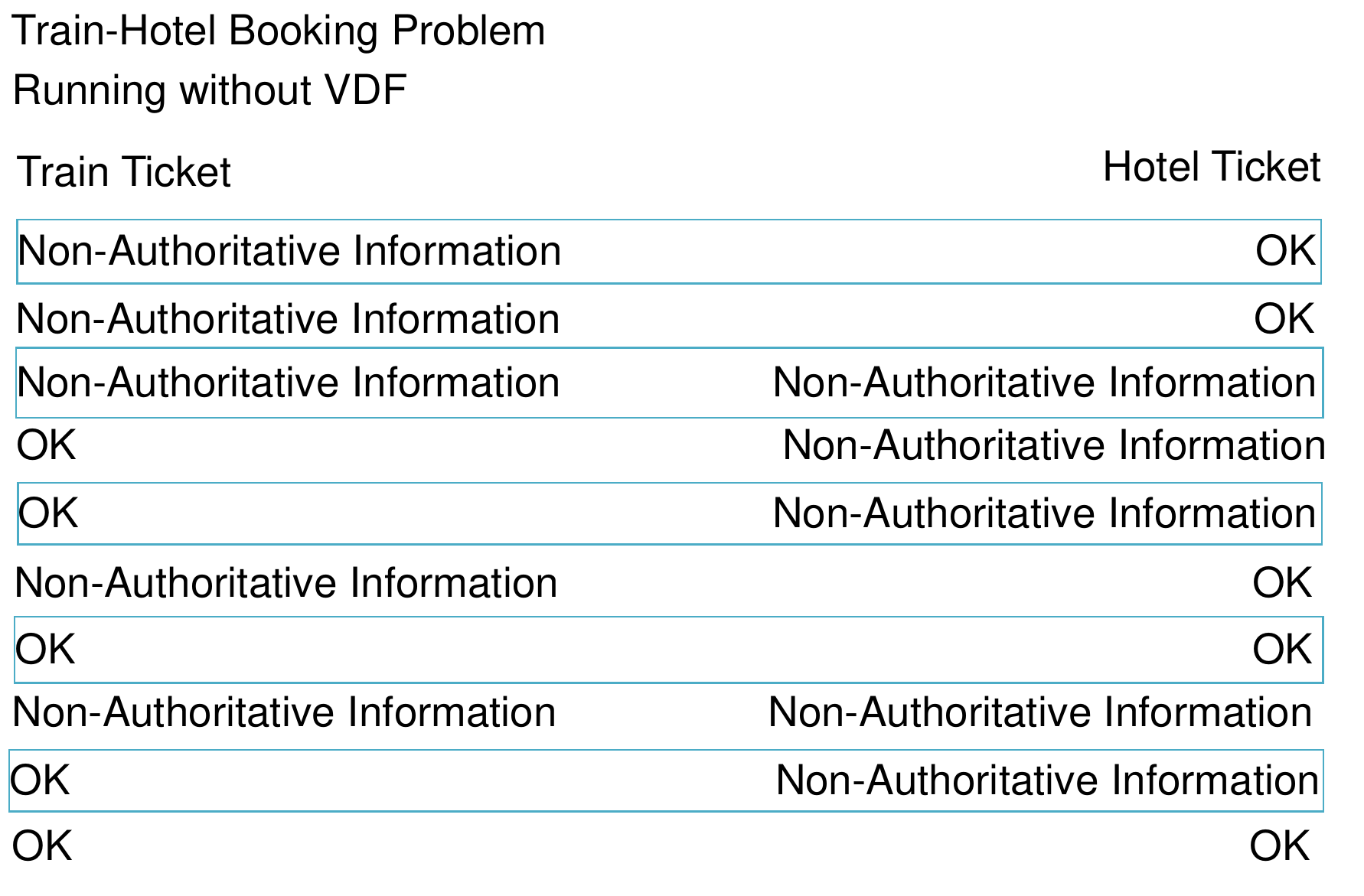}}\\
        (b)\label{g4}
    \end{minipage}
    \caption{(a) Full concurrent success with VDF. Nodes able to book both train and hotel tickets without isolation. (b). Operation failures without VDF. Nodes either having train or hotel tickets with randomized success.}\label{fig_chrome}
\end{figure}
\begin{figure}[!ht]
    \centering
   \begin{minipage}[b]{.5\linewidth}
        \centering
        \includegraphics[width=1\linewidth]{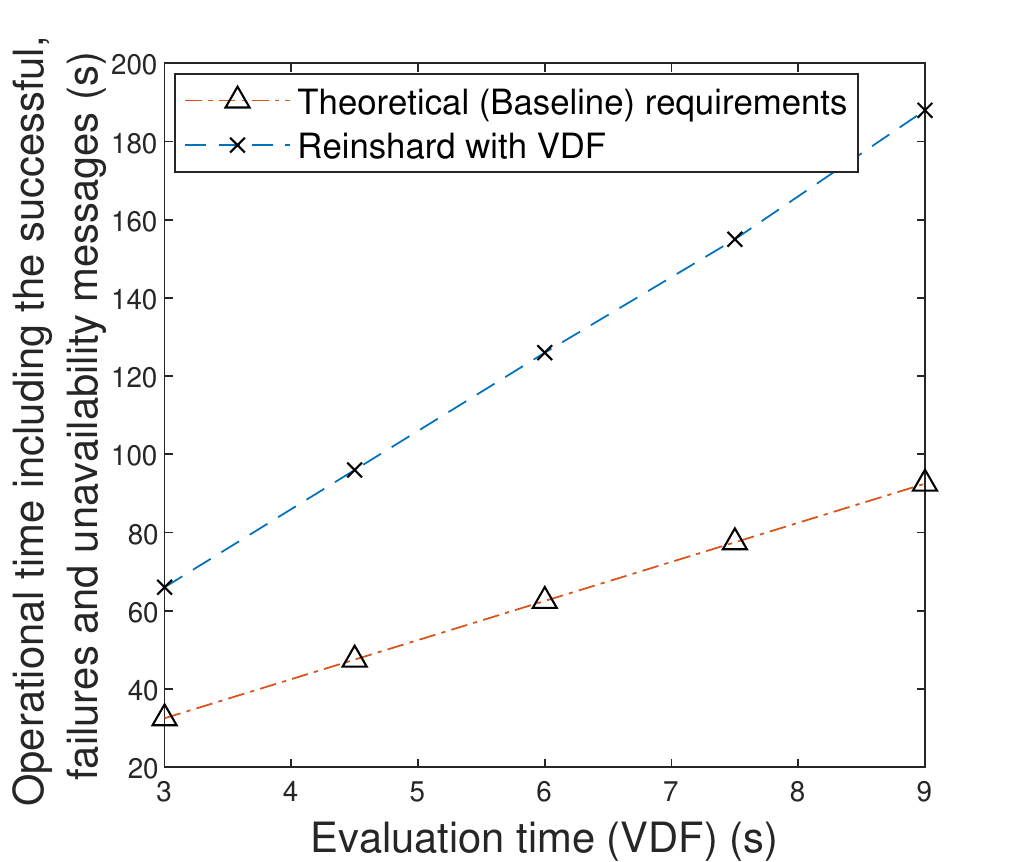}\\
       (a)\label{g5}
    \end{minipage}%
    \begin{minipage}[b]{.5\linewidth}
    \centering
        \includegraphics[width=1\linewidth]{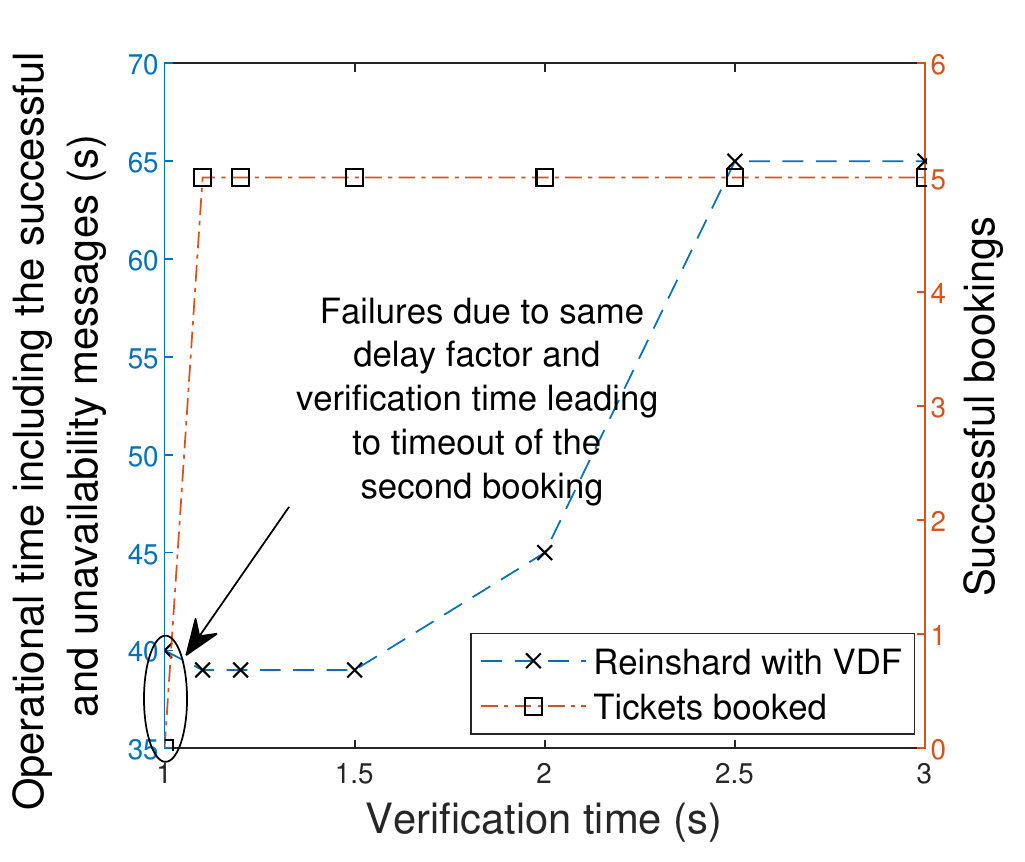}\\
        (b)\label{g6}
    \end{minipage}
    \caption{(a) Theoretical and observed values for Reinshard at 10 participants competing for tickets with a delay factor of 2.5s and verification time of 1s. (b) Operational overview with verification time of 1s and fixed evaluation time of 3s.}\label{mg2}
\end{figure}
%with and without VDF
%%\footnote{It is to be noted that our primary concern is not efficiency of VDF, rather utility of VDF in resolving concurrency issues. The only criteria set for the evaluation is that the train ticket must be booked first when concurrency has to be tested reason being that a plain VDF operations result in a deadlock as it is unsure about the priority of the ticket.}

\noindent \textbf{--Concurrency Resolution:} In \textit{Reinshard}, concurrency is resolved by using VDF as a delay factor that helps to maintain a lock (hold) on nodes. Such observations are presented by using the \textit{Train}-and-\textit{Hotel} booking problem~\cite{Tribble} with and without the use of VDF. To understand this, nodes acting as train and hotel booking servers are attached to the global blockchain by generating their respective PoS blocks and 50 concurrent requests are generated for each ticket. Both the servers belong to different shards. A simulated VDF is considered with a fixed value for the delay factor and is determined by the involved chain-pairs. At first, the number of available train and hotel tickets are deliberately kept at 5 each. It means that for efficiently resolving deadlocks in concurrent operations, the nodes with train tickets must be able to book the hotel as well. However, during run-time, it was observed that without the utilization of VDF, the nodes were able to generate random requests to either of the two ticketing servers resulting in some of the participants having one ticket only, as shown in Fig.~\ref{fig_chrome}(a). In contrast, the VDF models hold on the servers allowing nodes to establish connections with the train and hotel ticketing servers with the success of booking both the tickets, as shown in Fig.~\ref{fig_chrome}(b).

\begin{table*}[!ht]
\centering
\fontsize{8}{10}\selectfont
\caption{A comparison between the proposed solution and the related works (Computing Power (P), Stake (S)) (*the platform reserves 51\% of the balance to prevent attacks).}\label{table_comp_1}
\begin{tabular}{cccccccccc}
\hline
Blockchain & \begin{tabular}[c]{@{}c@{}}Block \\ proposal\end{tabular} & Type &  \begin{tabular}[c]{@{}c@{}}Attack \\ Resilience\end{tabular} & \begin{tabular}[c]{@{}c@{}}Cross-platform \\ integration\end{tabular} & \begin{tabular}[c]{@{}c@{}}Sharding \\   by architecture\end{tabular} & \begin{tabular}[c]{@{}c@{}}Concurrency \\ resolutions\end{tabular}\\
\hline
\hline
\begin{tabular}[c]{@{}c@{}}2-hop \\ blockchain~\cite{Duong2017}\end{tabular} & PoW-PoS & Hybrid & $>$50\% (P) & \xmark & \xmark & \xmark \\
TwinsCoin~\cite{Duong2018} & PoW-PoS & Hybrid &  $>$50\% (P) & \xmark & \xmark & \xmark \\
Hcash~\cite{Hcashteam2017} & PoW-PoS & Hybrid &  $>$50\% (S)* & \cmark & \xmark & \xmark\\
PeerCensus~\cite{decker2016bitcoin} & PoW-BFT & Hybrid & 33\% (P) & \xmark & \xmark & \xmark \\
Nxt~\cite{Nxt2014} & PoS & \begin{tabular}[c]{@{}c@{}}Chain-\\ based\end{tabular}  & $>$50\% (S) & \xmark & \xmark & \xmark \\
\begin{tabular}[c]{@{}c@{}}Proposed \\ (Reinshard)\end{tabular} & PoW-PoS & \begin{tabular}[c]{@{}c@{}}Hybrid \\ (Dual)\end{tabular} & $>$50\% (P), (S) & \cmark & \cmark & \cmark \\
\hline
\end{tabular}
\end{table*}

\begin{figure}[!ht]
\begin{center}
    \includegraphics[width=1\linewidth, trim = {5cm 6cm 7cm 5cm}, clip]{}
    \end{center}
\caption{An illustration of concurrency problem in the blockchain.}
\label{Darkpool}
\end{figure}
The proposed model was also evaluated in the presence of an adversary under the Train-Hotel booking problem, as shown in Fig.~\ref{Darkpool}. For this, the adversaries were modeled with and without VDF capabilities and their possessions were varied between 30-50\% of the computational power and stakes (controlling the chain-pairs). The simulations were carried under the same settings. Despite such favorable conditions for an adversary (which is practically difficult to attain because of sequential and unique features of the VDF function), it could only lead to DoS attack but cannot affect the concurrent operations as no users were stranded with a single ticket.
\begin{figure}[!ht]
  \centering
 \fbox{\includegraphics[width=0.7\linewidth]{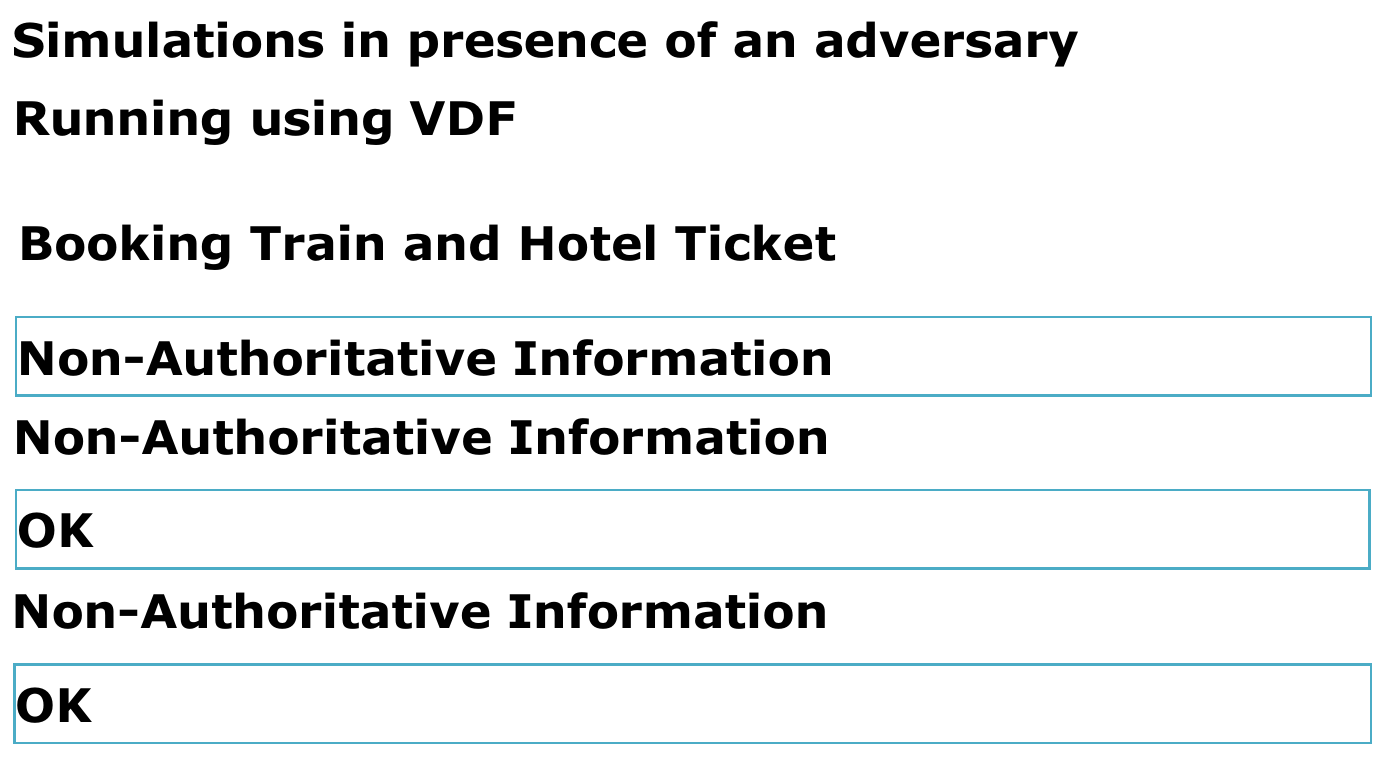}}
  \caption{Evaluations of concurrent operations when 30-50\% of the entire blockchain is controlled by an adversary under the Train-Hotel booking problem. No user was left with only one ticket. The only loss is observed in denial of tickets to some users, which is far more acceptable than the general scenario where under adversarial conditions, most of the users again ended with only of the two tickets only.}\label{appendix_fig}
\end{figure}
The key observations were the inability of an adversary to hinder concurrent operations even at the maximum capacity. This means under ideal assumptions; the adversary was able to prohibit the booking, but it could not cause serialization where one user could book only one ticket. The concurrency was not affected because of the presence of the validators (\textit{manages hold}), which are the part of PoW-PoS chain-pair and follows similar VDF puzzle-solving even in the shards. These observations can be visualized in Fig.~\ref{appendix_fig}. However, to fully control the entire blockchain, an adversary needs to be at the communicating chain-pairs as well as control more than 51\% of the PoS sub-chain, which is practically difficult as it involves solving an exponential time puzzle under the limits when other operations are being handled in a deterministic polynomial duration.

\begin{table*}[!ht]
\centering
\fontsize{7.5}{9}\selectfont
\caption{A comparison for attack prevention without centralization and third party evaluations (\textnormal{requires}- $^{\dag}$\textnormal{central entity/checkpoints}, $^{\ddag}$\textnormal{identity provider}).}\label{comp_table_2}
\begin{tabular}{cccccc}
\hline
Blockchain & \begin{tabular}[c]{@{}c@{}}Nothing-at-stake\end{tabular} & \begin{tabular}[c]{@{}c@{}}Long-range attack\end{tabular} & \begin{tabular}[c]{@{}c@{}}No Unwanted Centralization/Non-TEE executions\end{tabular} & \multicolumn{2}{c}{\begin{tabular}[c]{@{}c@{}}System level block generation\\ (Block Size: Time)\\ (KB: ms)\end{tabular}} \\
\hline
\hline
Ppcoin~\cite{king2012ppcoin,li2017securing} & \xmark & \cmark$^{\dag}$ & \xmark & \multicolumn{2}{c}{-} \\
Blackcoin~\cite{li2017securing,vasin2014blackcoin} & \xmark & \cmark$^{\dag}$ & \xmark & \multicolumn{2}{c}{-} \\
Snow White~\cite{daian2017snow,li2017securing} & \xmark & \cmark$^{\dag}$ & \xmark & \multicolumn{2}{c}{-} \\
\begin{tabular}[c]{@{}c@{}}TEE-based PoS~\cite{li2017securing}\end{tabular} & \cmark & \cmark$^{\ddag}$ & \xmark & \begin{tabular}[c]{@{}c@{}}1\\ 1000\end{tabular} & \begin{tabular}[c]{@{}c@{}}$\sim$1.70\\ $\sim$99-100\end{tabular} \\
NXT~\cite{Nxt2014,li2017securing} & \xmark & \cmark$^{\dag}$ & \xmark & \begin{tabular}[c]{@{}c@{}}1\\ 1000\end{tabular} & \begin{tabular}[c]{@{}c@{}}$\sim$0.42\\ $\sim$99-100\end{tabular} \\
Proposed& \cmark & \cmark & \cmark & \begin{tabular}[c]{@{}c@{}}1\\ 1000\end{tabular} & \begin{tabular}[c]{@{}c@{}}$\sim$0.39\\ $\sim$69.22\end{tabular}\\
\hline
\end{tabular}
\end{table*}
To further understand the handling capacity of the \textit{Reinshard} in the real-time, a delay factor is set around 2.5 s, which is the time to book one ticket, the evaluation time and verification time are ranged between 3 and 9 s, and 1 and 3 s, respectively. Where no concurrent control operations are used in \textit{Reinshard}, the allocation time is around 39 s for a short evaluation duration. However, as stated previously, such a scenario resulted in certain failures where a node is able to book only one ticket. Theoretically, a complete operation including unavailability messages must take 32.5 to 92.5 s with a varying evaluation time. This serves as our baseline as anything below this value means incorrect operations. The proposed \textit{Reinshard}, with VDF (sequences in booking a train first), performs accurately with only successful participants getting both the tickets and the entire process with unavailability messages was completed in the time between 66 and 188 s with varying evaluation time, as shown in Fig.~\ref{mg2}(a). The actual time to book the tickets, irrespective of the competing ones, is between 34 and 35 s. As per the architecture as well as the security analysis, no ticket should be booked when the verification time and the delay factor are the same. Such observations can be noticed in Fig.~\ref{mg2}(b). These traces show that at the same verification time and delay factor, the system goes into deadlocks and generates error messages without booking any ticket. However, as the significant difference between both increases, users can book all the available tickets in a competitive time.

\noindent\textbf{--Comparison with state-of-the-art}: Some related works, as discussed in Table~\ref{table_comp_1}, help to understand the key differences between the proposed (\textit{Reinshard}) and the existing consensus mechanisms at the architectural and property levels. It is clear that none of the existing solutions provide any direct resolution to shard the nodes especially for using blockchain beyond cryptocurrencies. Recently, it has been determined that the architectures that aim at scaling through sharding must be able to resolve concurrency issues amongst shards, which is neither provided nor resolved by any of the existing approaches such as 2-hop blockchain~\cite{Duong2017}, TwinsCoin~\cite{Duong2018}, Hcash~\cite{Hcashteam2017}, PeerCensus~\cite{decker2016bitcoin}, Nxt~\cite{Nxt2014} or Ppcoin~\cite{king2012ppcoin}. Apart from the Hcash and the proposed solution, no other blockchain even talks about cross-platform integration, which is the actual future of the blockchain systems. Even for 51\% attack resilience, \textit{Reinshard} covers both the computational power as well as stake owner-ships, whereas other hybrid mechanisms only rely on the computational power for attack resilience. For the block generation, the delay from the VDF only accounts when cross-shard and concurrent requests are involved in the blockchain, otherwise, there is no delay imposed on the nodes which prevent any overheads on the operations of the entire chain. Apart from these, \textit{Reinshard} can prevent nothing-at-stake and long-range attacks that too without the use of checkpoints or trusted hardware. The existing solutions, \cite{king2012ppcoin,Nxt2014,daian2017snow,li2017securing,vasin2014blackcoin}, partially resolve the long-range attacks as their key functionality is dependent on a central entity.

At the system-level, the block generation rate of \textit{Reinshard} was compared with the Nxt~\cite{Nxt2014} and Trusted Execution Environment (TEE)-based PoS~\cite{li2017securing}, as the latter improves the former for preventing the nothing-at-stake and long-range attacks, however, at the cost of unwanted centralization and dependence on an external identity provider. Results were compared at two block sizes of 1 KB and 1000 KB using SHA-256 where \textit{Reinshard} was simulated with 100 to 1000 validators. The evaluations (Table~\ref{comp_table_2}) suggest that Reinshard can generate 1 KB block in 0.39 ms and 1000 KB block in 69.22 ms with a maximum of 1000 validators, which is far better than the Nxt and TEE-based PoS as these reported a block generation of 0.42 and 1.72 ms for 1 KB, respectively, and around 100 ms (approx.) for 1000 KB block size. \textit{Reinshard}, with its dual-chain architecture, the unpredictability of block allocation, and VDF puzzle, is efficient and secure. If an attacker tries to stick for a duration and be a leader at a particular height, it has to become one of the PoW-PoS chain-pairs, whose difficulty is affected by RL-reward mechanisms and the non-probabilistic adjustments reduce any chances for becoming a leader if the node does not show consistent participation. Furthermore, counterfeiting the PoS sub-chain requires solving of VDF exponential puzzle in polynomial time that should be lesser than the verification time, which is practically beyond the limits of the current infrastructure. Thus, protecting \textit{Reinshard}, without requiring any third-party solutions, checkpoints or identify providers.
\section{Conclusion}\label{conclusions}
A dual blockchain, \textit{Reinshard}, via hybrid consensus was proposed, which is inspired by the existing 2-hop blockchain for optimal sharding with a high capacity of resolving concurrency issues in cross-shard communications. The proposed blockchain can provide a low complex and non-probabilistic difficulty adjustment in a non-flat model, which is far more suitable for practical applications than the probabilistic adjustments. The use of VDF helped to resolve concurrency issues as well as control the growth of the PoS sub-chain under the PoW-PoS chain-pair. The security proofs helped to validate the resilience of the proposed blockchain against known attacks as well as establish general blockchain properties. The experimental study helped to understand the practical aspect where the Reinshard is used for handling concurrent operations. In the future, the key target would be to scale the proposed Reinshard as an independent platform, which can provide a unique feature of combining different PoS-blockchains as sub-chains of the PoW-PoS chain pairs. Furthermore, a rigorous mechanism would be developed to provide a secure and efficient way of generating VDF for PoS-leader selection.
%%%\section*{Acknowledgment}
%%%This project is supported by the Ministry of Education, Singapore, under its MOE AcRF Tier 2 grant (MOE2018-T2-1-111).
%%%\section*{Supplementary File}
%%%A supplementary file is provided, which explains the construction of the simulated (weak) Verifiable Delay Function (VDF).
\bibliographystyle{ieeetr}
\bibliography{ref_reinshard}

\ifCLASSOPTIONcaptionsoff
  \newpage
\fi

\newpage
\begin{IEEEbiography}[{\includegraphics[width=1.07in,height=1.20in,clip]{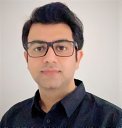}}]{Vishal Sharma}
received the PhD and BTech degrees in computer science and engineering from Thapar University (2016) and Punjab Technical University (2012), respectively. He is working as a Lecturer ($\sim$Assistant Professor) in the School of Electronics, Electrical Engineering and Computer Science (EEECS) at the Queen's University Belfast (QUB), Northern Ireland, United Kingdom. Before coming to QUB, he was a Research Fellow in the Information Systems Technology and Design (ISTD) Pillar at the Singapore University of Technology and Design (SUTD), Singapore where he worked on the future-proof blockchain systems funded by SUTD-MoE. From Nov'16 to Mar'19, he worked in the Information Security Engineering Department at Soonchunhyang University, South Korea in multiple positions (Nov'16 to Dec'17: Postdoctoral Researcher; Jan'18 to Mar'19: Research Assistant Professor). He also held a joint postdoctoral position with Soongsil University, South Korea. He was affiliated with the Industry-Academia Cooperation Foundation and the Mobile Internet Security lab at Soonchunhyang University. Before this, he worked as a lecturer in the Computer Science and Engineering Department at Thapar University, India. He is the recipient of three best paper awards from the International Conference on Communication, Management and Information Technology (ICCMIT), Warsaw, Poland, in April 2017; from CISC-S'17 South Korea in June 2017; and from IoTaas Taiwan in September 2017. He is a member of IEEE, a professional member of ACM and a past Chair for ACM Student Chapter-TIET Patiala. He has authored/co-authored more than 100 journal/conference articles and book chapters and co-edited two books with Springer. He serves as the ATE for the IEEE Communications Magazine and an AE for the IET-CAAI TRIT, Wireless Communications and Mobile Computing, and IET Networks. His areas of research and interests are 5G networks, Blockchain systems, aerial (UAV) communications, CPS-IoT behaviour-modelling, and mobile Internet systems.
\end{IEEEbiography}
\vskip 0pt plus -1fil
\begin{IEEEbiography}[{\includegraphics[width=1.07in,height=1.20in,clip]{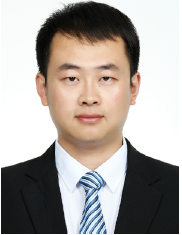}}]{Zengpeng Li}
is an associate research professor at Shandong University (SDU), China. Before joining SDU, he worked as a postdoctoral research fellow at the Singapore University of Technology and Design (SUTD) and a postdoctoral research associate at Lancaster University, UK. He received PhD from Harbin Engineering University. During his PhD program, he was a researcher assistant (a.k.a., long-term visiting PhD student) at the Virginia Commonwealth University and University of Auckland, respectively. His primary research interests are in cryptography and secure distributed computing, and specific topics include: compute on encrypted data, verifiable computation in outsourced environments, password-based cryptography, random beacon, and blockchain consensus.
\end{IEEEbiography}
\vskip 0pt plus -1fil
\begin{IEEEbiography}[{\includegraphics[width=1.07in,height=1.20in,clip]{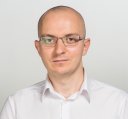}}]{Pawel Szalachowski}
was an Assistant Professor at Singapore University of Technology and Design (SUTD), Singapore. Before joining SUTD, he was a senior researcher at ETH Zurich. He received his PhD degree in Computer Science from Warsaw University of Technology in 2012. His current research interests include systems security and blockchains.
\end{IEEEbiography}
\vskip 0pt plus -1fil
\begin{IEEEbiography}[{\includegraphics[width=1.07in,height=1.20in,clip]{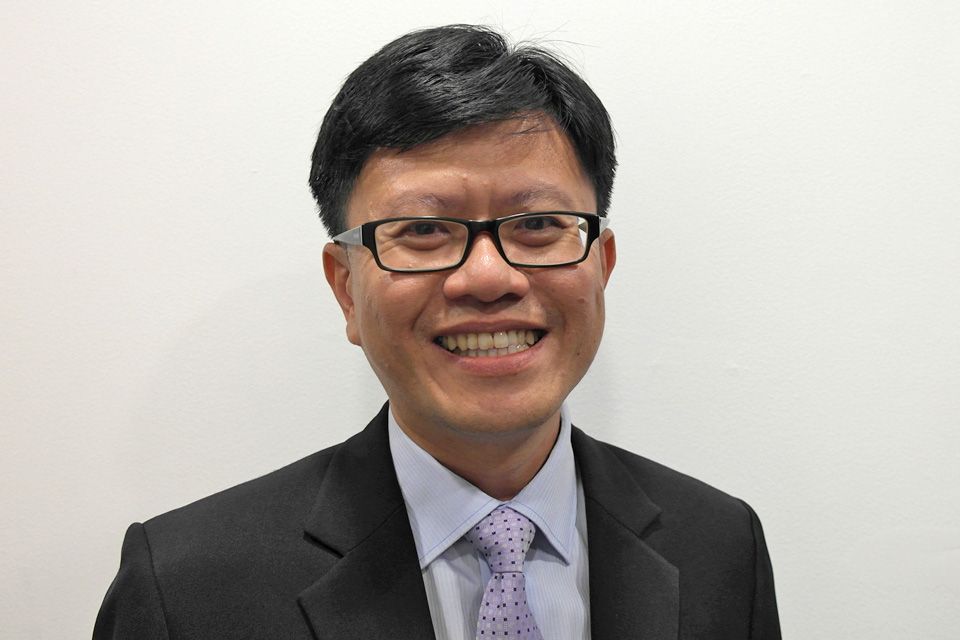}}]{Teik Guan Tan}
is working towards his PhD at Singapore University of Technology and Design (SUTD) under the supervision of Prof. Jianying Zhou. He completed his BSc and MSc from the National University of Singapore between 1992 and 1996. In his past 20 years of experience in the industry, he has designed and implemented banking and government security systems supporting millions of users and protecting billions of dollars worth of transactions. His research interests are in the intersection of authentication, applied cryptography and Quantum algorithms.
\end{IEEEbiography}
\vskip 0pt plus -1fil
\begin{IEEEbiography}[{\includegraphics[width=1.07in,height=1.20in,clip]{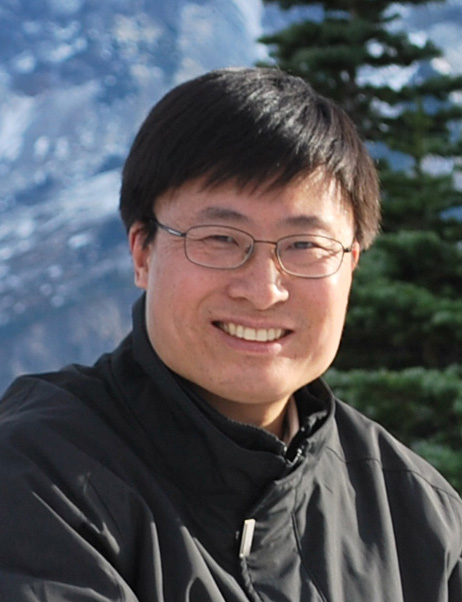}}]{Jianying Zhou}
is a professor and co-centre director for iTrust at Singapore University of Technology and Design (SUTD). Before joining SUTD, he was a principal scientist and the head of the Infocomm Security Department at Institute for Infocomm Research, A*STAR. He also worked at the headquarters of Oracle as a security consultant. Prof. Zhou received PhD in Information Security from Royal Holloway, University of London. His research interests are in applied cryptography and network security, cyber-physical system security, mobile and wireless security. He has published 250+ refereed papers at international conferences and journals with 10,000+ citations and received ESORICS'15 best paper award. He has 2 technologies being standardized in ISO/IEC 29192-4 and ISO/IEC 20009-4, respectively. He also has 10+ technologies being patented. Prof. Zhou is a co-founder \& steering committee co-chair of ACNS, which is ranked in the top 20 cybersecurity conferences. He is also steering committee chair of ACM AsiaCCS, and steering committee member of Asiacrypt. He has served 200+ times in international cybersecurity conference committees (ACM CCS \& AsiaCCS, IEEE CSF, ESORICS, RAID, ACNS, Asiacrypt, FC, PKC etc.) as general chair, program chair, and PC member. He has also been on the editorial board of top cybersecurity journals including IEEE Security \& Privacy, IEEE TDSC, IEEE TIFS, Computers \& Security. He received the ESORICS Outstanding Contribution Award in 2020, in recognition of contributions to the community.
\end{IEEEbiography}

% You can push biographies down or up by placing
% a \vfill before or after them. The appropriate
% use of \vfill depends on what kind of text is
% on the last page and whether or not the columns
% are being equalized.

%\vfill

% Can be used to pull up biographies so that the bottom of the last one
% is flush with the other column.
%\enlargethispage{-5in}

% that's all folks
\end{document}